\documentclass[aos,preprint]{imsart}

\makeatletter
\let\@fnsymbol\@arabic
\makeatother

\RequirePackage[OT1]{fontenc}
\RequirePackage{amsthm,amsmath,amssymb}
\RequirePackage{natbib}
\RequirePackage[colorlinks,citecolor=blue,urlcolor=blue]{hyperref}

\usepackage[pdftex]{graphicx}
\usepackage{array}
\usepackage{algorithm}

\usepackage{appendix}

\usepackage{color}
\usepackage{algpseudocode}
\usepackage{theoremref}
\usepackage{multirow}

\arxiv{arXiv:1410.8260}

\startlocaldefs

\defcitealias{taylor2013tests}{Taylor, Loftus and Tibshirani}
\defcitealias{fithian2014optimal}{Fithian, Sun and Taylor}

\newcommand{\citepryan}{(\citetalias{taylor2013tests}, \citeyear{taylor2013tests})}
\newcommand{\citetryan}{\citetalias{taylor2013tests} (\citeyear{taylor2013tests})}

\newtheorem{lemma}{Lemma}
\newcolumntype{N}{@{}m{0pt}@{}}
\numberwithin{equation}{section}
\theoremstyle{plain}
\newtheorem{thm}{Theorem}[section]
\newtheorem{test}{Test}[section]

\endlocaldefs

\begin{document}

\begin{frontmatter}
\title{Selecting the number of principal components: estimation of the true rank of a noisy matrix}
\runtitle{On Rank Estimation in PCA}

\begin{aug}
\author{\fnms{Yunjin} \snm{Choi}\ead[label=e1]{yunjin@stanford.edu}},
\author{\fnms{Jonathan} \snm{Taylor}\thanksref{t1}\ead[label=e2]{jonathan.taylor@stanford.edu}}
\and
\author{\fnms{Robert} \snm{Tibshirani}\thanksref{t2}
\ead[label=e3]{tibs@stanford.edu}}

\thankstext{t1}{Supported by NSF Grant DMS-12-08857 and AFOSR Grant 113039.}
\thankstext{t2}{Supported by NSF Grant DMS-99-71405 and NIH Contract N01-HV-28183.}

\runauthor{Y. Choi, J. Taylor, and R. Tibshirani}

\affiliation{Stanford University}

\address{Y. Choi\\
Department of Statistics\\
Stanford University\\
Stanford, California 94305 \\
USA\\
\printead{e1}
}

\address{J. Taylor\\
Department of Statistics\\
Stanford University\\
Stanford, California 94305 \\
USA\\
\printead{e2}
}

\address{R. Tibshirani\\
Department of Health, Research \& Policy\\
Department of Statistics\\
Stanford University\\
Stanford, California 94305 \\
USA\\
\printead{e3}
}
\end{aug}

\begin{abstract}
Principal component analysis (PCA) is a well-known tool in multivariate statistics. 
One significant challenge in using PCA is the choice of the number of components. 
In order to address this challenge, we propose an {\em exact} distribution-based method for 
hypothesis testing and construction of confidence intervals for signals in a noisy matrix. Assuming Gaussian noise, we use the conditional distribution of the 
singular values of a Wishart matrix and derive exact hypothesis tests and confidence 
intervals for the true signals. 
Our paper is based on the  approach of \citetryan $\;$  for  testing the global null: we generalize it to  test for any number  of  principal components, and derive an integrated version with greater power. 
In simulation studies we        find     that our proposed methods compare well to existing approaches.
\end{abstract}

\begin{keyword}[class=MSC]
\kwd{62J05}
\kwd{62J07}
\kwd{62F03.}
\end{keyword}

\begin{keyword}
\kwd{principal components}
\kwd{exact test}
\kwd{p-value.}
\end{keyword}

\end{frontmatter}

\section{Introduction}
\subsection{Overview}
       Principal component analysis (PCA) is a commonly used method in multivariate statistics. 
        It can be used for a variety of purposes including as a descriptive tool for examining the structure
 of a data matrix, as a pre-processing step for reducing the dimension of the column space of the matrix \citep{josse2012selecting}, or for matrix completion \citep{cai2010singular}. 
 
        One important challenge  in PCA is how to determine the number of components to retain.
  \cite{jolliffe2005principal} provides an excellent summary of existing approaches to determining the number of components, grouping them into three branches:  subjective methods  (e.g., the scree plot), distribution-based test tools (e.g., Bartlett's test), and computational procedures (e.g., cross-validation).
Each branch has advantages as well as disadvantages, and no single method has emerged as the community standard.

        Figure \ref{fig:scor} offers a scree plot as an example. The data are five test scores
from 88 students (taken from \citet{mardia1979multivariate}); the figure shows
 the five singular values in decreasing
order.
\begin{figure}
        \center
            \includegraphics[width=.5\textwidth]{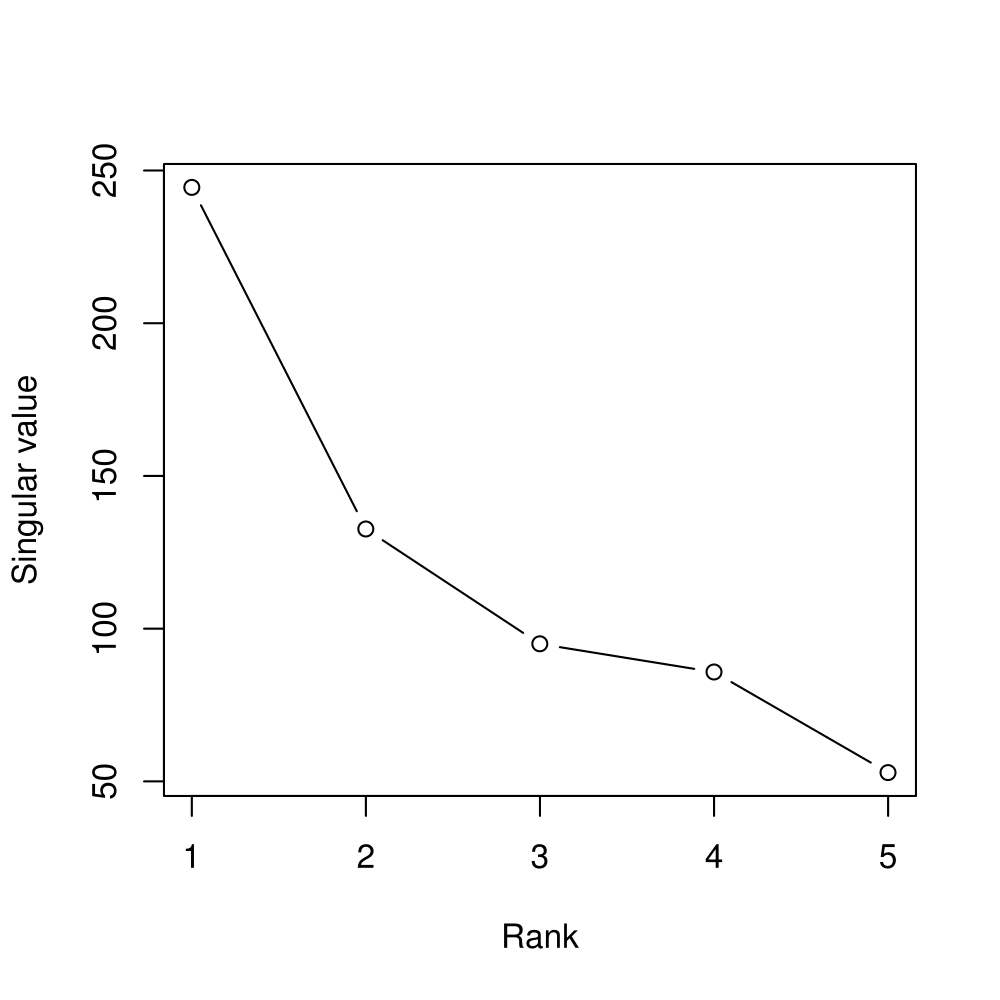} 
       \caption{Singular values of the score data in decreasing order. The data consist of exam scores of 88 students on five different topics (Mechanics, Vectors, Algebra, Analysis and Statistics).}
     \label{fig:scor}
        \end{figure}
The ``elbow'' in this plot seems to occur at rank two or  three, but it is not clearly visible. We revisit this example with our proposed approach in Section \ref{subsec:real_example}.

In this paper, we propose a statistical method for determining the rank of the signal matrix in a noisy matrix model. The estimated rank here corresponds to the number of components to retain in PCA. This method is derived from the conditional  Gaussian-based distribution of the singular values, and yields  exact p-values and confidence intervals.
\subsection{Related work\label{sec:related}} 
        Our method is motivated by the {\em Kac-Rice test} \citepryan, an exact method for testing and constructing confidence intervals for signals under the global null hypothesis in adaptive regression. Our work corresponds to the {Kac-Rice test} under the  global null scenario, in a penalized regression minimizing the Frobenius norm with a nuclear norm penalty. 
        In this paper, we extend the test and confidence intervals to the general case with improved power. The resulting statistic uses the survival function of the conditional distribution of the eigenvalues of a Wishart matrix.
        
        In the context of inference based on  the distribution of eigenvalues, \cite{muirhead} and, more recently, \cite{kritchman2008determining} have proposed methods for testing essentially  the same hypothesis as in this paper. Both \cite{muirhead} and \cite{kritchman2008determining} benefit from an  asymptotic distribution of the test statistic: \cite{muirhead}   forms a likelihood ratio test with the asymptotic Chi-square distribution. \cite{kritchman2008determining} use the Tracy-Widom law, which is the asymptotic distribution of the largest eigenvalue of a Wishart matrix, incorporating the result of \cite{johnstone2001distribution}. We provide a method  for constructing confidence intervals in addition to hypothesis testing, with the procedures being exact.
\subsection{Organization of the paper}
The rest of the paper is organized as follows.
In Section \ref{sec:method}, we propose methods based on the distribution of eigenvalues of a Wishart matrix. Section \ref{sec:kic-rice} introduces the main points of the {\em Kac-Rice test} \citepryan\  from which we derive our proposals. Then, we suggest a procedure for  hypothesis testing of the true rank of a signal matrix in Section \ref{sec:hypothesis}. Our method for constructing exact confidence intervals of signals is described in Section \ref{sec:conf}.

In Section \ref{sec:rank}, we propose a method for estimating the rank of a signal matrix from the hypothesis tests. We illustrate sequential hypothesis testing procedure for determining the matrix rank,  based on a proposal of \citet{g2013false}.

Section \ref{sec:noise} introduces a data-driven method for estimating the noise level. Our suggested method is analogous to cross-validation. 
In order to apply cross-validation to a data matrix, we use a method suggested by \cite{mazumder2010spectral}. 

Section \ref{sec:sim} provides additional examples of the suggested methods.  In Section \ref{subsec:unknownSig}, simulation results with estimated noise level are presented. Section \ref{subsec:nonGaussian} shows simulation results with non-Gaussian noise to check for robustness. 
We revisit the real data example introduced in Figure \ref{fig:scor} in Section \ref{subsec:real_example}. The paper concludes with a brief discussion in Section \ref{sec:conclusion}. 
\section{A Distribution-Based Method\label{sec:method}}
        Throughout this paper, we assume that the  observed data matrix $Y \in \mathbb{R}^{N \times p}$  is the sum of a low-rank signal matrix $B \in \mathbb{R}^{N \times p}$ and a Gaussian noise matrix $E \in \mathbb{R}^{N \times p} $ as follows:
        \begin{eqnarray*}
               &&  Y = B + E, \\
               &&{\rm rank}(B) = \kappa < \min (N, p) \\ 
                && E_{ij} \sim\mbox{ i.i.d.  }  N(0, \sigma^2) \mbox{ for }i \in \{1, \cdots, N\}, \mbox{ } j \in \{1, \cdots, p \}\\
        \end{eqnarray*}
that is, 
        \begin{eqnarray}
                \label{gaussianModel}
                Y \sim N(B, \sigma^2 I_{N} \otimes I_{p}).
        \end{eqnarray}
In this paper, without loss of generality, we assume that $N>p$.
We focus on the estimation of $\kappa$, the rank of the signal matrix $B$, and the construction of confidence intervals for the signals in $B$.

We first review the global null test and confidence interval construction of the first signal of \citetryan \space and its application in matrix denoising problem in Section \ref{sec:kic-rice}.
Then we extend the global null test to a general test procedure for testing $H_{k,0}: {\rm rank}(B) \leq k-1 \mbox{ versus } H_{k,1}: {\rm rank}(B) \geq k \mbox{ for } k = 1, \cdots, p-1$ in Section \ref{sec:hypothesis} and describe how to construct confidence intervals for the $k^{th}$ largest signal parameters in Section \ref{sec:conf}. 
\subsection{Review of the Kac-Rice test\label{sec:kic-rice}}
We briefly discuss the framework of \citetryan \space and its application to a matrix denoising problem, which we extend further later in this paper. Section \ref{subsec:global} covers the testing procedure for a class of null hypothesis. This null hypothesis corresponds to the global null in our matrix denoising problem. In Section \ref{subsec:conf}, we construct a confidence interval for the largest signal.
\subsubsection{Global null hypothesis testing\label{subsec:global}}
         \citetryan \space derived the {\em Kac-Rice test}, an exact test for a class of regularized regression problems of the following form:
        \begin{eqnarray} 
                \label{generalObj}
                \hat{\beta} \in \underset{\beta \in \mathbb{R}^p}{\mbox{argmin}}
                \frac{1}{2} ||y - X\beta||_{2}^{2} + \lambda \cdot \mathcal{P}(\beta)
        \end{eqnarray}
with an outcome $y \in \mathbb{R}^{p}$, a predictor matrix $X \in \mathbb{R}^{N \times p}$ and a penalty term $\mathcal{P}(\cdot)$ with a regularization parameter $\lambda \geq 0$. 
Assuming that the outcome $y \in \mathbb{R}^{p}$ is generated from
        \begin{eqnarray*}
                y \sim N(X\beta, \Sigma),
        \end{eqnarray*}
the {\em{Kac-Rice test}} \citepryan \space provides an exact method for testing
        \begin{eqnarray}
                        \label{eqn:genhyp}
                H_0:  \mathcal{P}(\beta) = 0
        \end{eqnarray}
under the assumption that the penalty function $\mathcal{P}$ is a support function of a convex set $\mathcal{C} \subseteq \mathbb{R}^p$. i.e.,
        \begin{eqnarray*}
              \mathcal{P}(\beta) = \max_{u \in \mathcal{C}} u^{T}\beta.
        \end{eqnarray*}
        
        When applied to a matrix denoising problem of a popular form, (\ref{eqn:genhyp}) becomes a global null hypothesis:
        \begin{eqnarray}
                \label{eqn:globnull}
                H_0 : \Lambda_1 = 0 \equiv {\rm rank}(B)=0 \equiv B=0_{N \times p}
        \end{eqnarray}
where $\Lambda_1 \geq \Lambda_2 \geq \cdots \geq \Lambda_p \geq 0$ denote the singular values of $B$. 
        Here are the details.
    For an observed data matrix $Y \in \mathbb{R}^{N  \times p}$, a widely used method to recover the signal matrix $B$ in (\ref{gaussianModel}) is to solve the following criterion:
        \begin{eqnarray}
                \label{obj}
                \hat{B} \in \underset{B \in \mathbb{R}^{N \times p}}{\mbox{argmin}} 
                \frac{1}{2} ||Y-B||_{F}^2 + \lambda ||B||_{*} \mbox{ where } \lambda > 0
        \end{eqnarray}
where $||\cdot||_F$ and $||\cdot||_{*}$ denote a Frobenius norm and a nuclear norm respectively. The nuclear norm plays an analogous role as an $\ell_1$ penalty term in lasso regression \citep{tibshirani1996regression}. The objective function (\ref{obj}) falls into the class of regression problems described in (\ref{generalObj}), with the predictor matrix $X$ being $I_{N} \otimes I_{p}$  and the penalty function $\mathcal{P}(\cdot)$ being
        \begin{eqnarray*}
                \mathcal{P}(B) = ||B||_{*} = \max_{u \in \mathcal{C}}   \langle u, B \rangle
        \end{eqnarray*}
with $\mathcal{C} = \{ A: ||A||_{op} \leq 1 \}$ where $||\cdot||_{op}$ denotes a spectral norm. We can therefore directly apply the  {\em Kac-Rice test} with the resulting test statistic as follows, under the assumed model  discussed in the beginning of Section \ref{sec:method}:
        \begin{eqnarray}
                \label{testStatistic}
                \mathbb{S}_{1, 0} = \frac{\int^{\infty}_{d_1}  e^{-\frac{z^2}{2\sigma^2}} z^{N-p} \prod_{j=2}^{p} (z^2 - d_j^2)    dz}
                {\int^{\infty}_{d_2} e^{-\frac{z^2}{2\sigma^2}} z^{N-p} \prod_{j=2}^{p} (z^2 - d_j^2)  dz}
        \end{eqnarray}
where $d_1 \geq d_2 \geq \cdots \geq d_p \geq 0$ denote the observed singular values of $Y$.
The test statistic $\mathbb{S}_{1, 0}$ in (\ref{testStatistic}) is uniformly distributed under the null hypothesis (\ref{eqn:globnull}) and provides a p-value for testing the global null hypothesis: the value $\mathbb{S}_{1,0}$ represents the probability of observing more extreme values than $d_1$ under the null hypothesis.

        Viewed differently, the test statistic $\mathbb{S}_{1, 0}$ corresponds to a conditional survival function of the largest observed singular value $d_1$ conditioned on all the other observed singular values $d_2, \cdots, d_p$. The integrand of $\mathbb{S}_{1, 0}$ coincides with the conditional distribution of the largest singular value of a central Wishart matrix up to a constant \citep{james1964distributions}. Its denominator acts as a normalizing constant  because the domain of the largest singular value $d_1$ is $(d_2, \infty)$. A small magnitude of $\mathbb{S}_{1,0}$ implies large $d_1$ compared to $d_2$ and thus supports $H_1: {\Lambda}_1 > 0$.

\subsubsection{Confidence intervals for the largest signal\label{subsec:conf}}
Along with the {\em Kac-Rice test} mentioned in Section \ref{subsec:global}, a procedure for constructing an exact confidence interval for the leading signal in adaptive regression is suggested in \citetryan. As in Section \ref{subsec:global}, by applying the result of \citetryan \space to our matrix denoising setting, we can generate an exact confidence interval for $\tilde{\Lambda}_1$ which is defined as follows:
        \begin{eqnarray*} 
          \tilde{\Lambda}_1 = \langle U_{1}V_{1}^{T}, B\rangle
    \end{eqnarray*}
where $Y = U D V^{T}$ is a singular value decomposition of $Y$ with $D = {\rm diag}(d_1, \cdots, d_p)$ for $d_1  \geq \cdots \geq d_p$, and $U_{1}$ and $V_{1}$ are the first column vectors of $U$ and $V$ respectively. It is desirable to directly find the confidence interval for  $\Lambda_1$ instead of $\tilde{\Lambda}_1$, however, as $B$ is unobservable, $U_1V_1^{T}$ is the ``best guess'' of the unit vector associated with $\Lambda_1$ in its direction.

        To discuss the procedure in detail, in the matrix denoising problem of (\ref{obj}), the result from \citetryan \space yields an exact conditional survival function of $d_1$ around $\tilde{\Lambda}_1$ as follows:
        \begin{eqnarray}
        \label{trueTestStatistic} 
                \mathbb{S}_{1, \tilde{\Lambda}_1} = \frac{\int^{\infty}_{d_1}  e^{-\frac{(z-\tilde{\Lambda}_1)^2}{2\sigma^2}} z^{N-p} \prod_{j=2}^{p} (z^2 - d_j^2)    dz}
                {\int^{\infty}_{d_2} e^{-\frac{(z-\tilde{\Lambda}_1)^2}{2\sigma^2}} z^{N-p} \prod_{j=2}^{p} (z^2 - d_j^2)  dz} \sim \mbox{Unif}(0,1).         
        \end{eqnarray}  
The test statistic $\mathbb{S}_{1,0}$ for testing $H_0: \Lambda_1=0$ in Section \ref{subsec:global} conforms to (\ref{trueTestStatistic}) when it is true that $H_0: \Lambda_1=0$. 
The suggested procedure for constructing the level $\alpha$ confidence interval is as follows:
  \begin{eqnarray} 
        \label{CI}
        CI = \{ \delta:    {\rm min} \left( \mathbb{S}_{1,\delta}, 1-\mathbb{S}_{1,\delta} \right) > \alpha/2 \}.
        \end{eqnarray}
Since $\mathbb{S}_{1,\tilde{\Lambda}_1}$ in (\ref{trueTestStatistic}) is uniformly distributed, we observe that 
        \begin{eqnarray*}
                \mathbb{P}(\tilde{\Lambda}_1 \in CI ) = 1-\alpha
        \end{eqnarray*}
and thus (\ref{CI}) generates an exact level $\alpha$ confidence interval.
\subsection{General hypothesis testing\label{sec:hypothesis}}
In this section, we extend the test for the global null in Section \ref{subsec:global}
to a general test which investigates whether there exists the $k^{th}$ largest signal in $B$. 

Suppose that we want to test the  hypothesis
   \begin{eqnarray}
        \label{generalTest}
        && H_{0,k}: \Lambda_k = 0 \mbox{ versus } H_{1,k}: \Lambda_k > 0 \nonumber \\
        &\Leftrightarrow& H_{0,k}: {\rm rank}(B) < k \mbox{ versus } H_{1,k}: {\rm rank}(B) \geq k.
        \end{eqnarray}
for $k=1, \cdots, p-1$. 
For $k=1$, the null hypothesis in (\ref{generalTest}) corresponds to a global null as in Section \ref{subsec:global}.
With $k=p$ the signal matrix $B$ is full rank under the alternative hypothesis, and the test becomes unidentifiable from a low rank signal matrix with higher noise level problem. Thus, we do not consider the case of $k=p$.

        One of the most straightforward approaches for extending the global test  (\ref{testStatistic}) to testing (\ref{generalTest}) for $k=1,\cdots, p-1$ would be to apply it sequentially.
That is, we can remove the first $k-1$ observed singular values of $Y$
and then apply the test to the remaining $p-k+1$ singular values, in analogy to other methods dealing with essentially the same hypothesis testing \citep{muirhead, kritchman2008determining}.

\begin{figure}[t]
        \center
            \includegraphics[width=\textwidth]{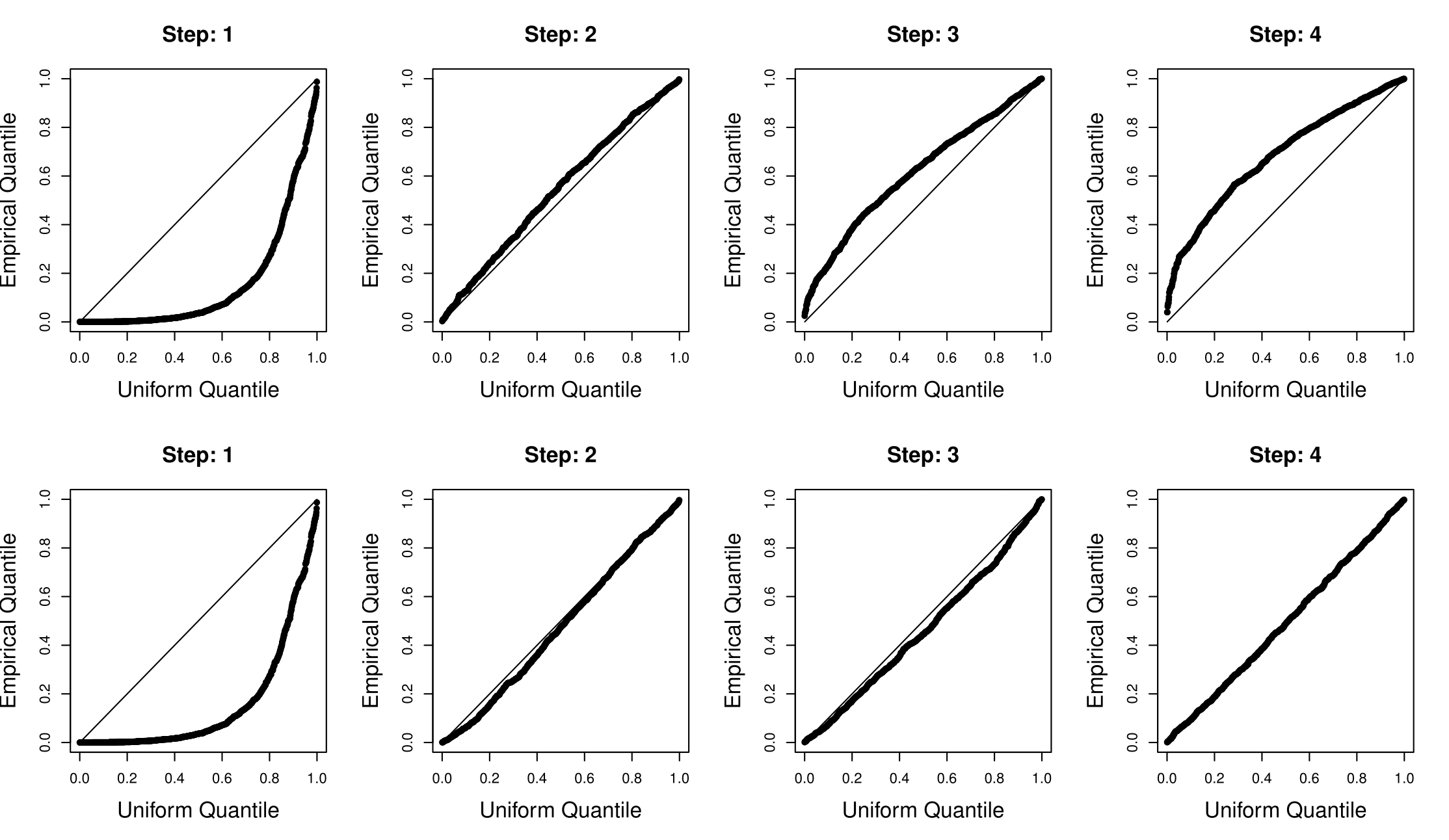}
       \caption{Quantile-quantile plots of the observed quantile of p-values  versus the uniform quantiles. With $N=20$ and $p=10$, the true rank of $B$ is rank$(B)=1$. The top panels are from the sequential {\em Kac-Rice test} and the bottom panels are from the {\tt CSV}. The $k^{th}$ column represents quantile-quantile plots of p-values for testing $H_{0,k}: {\rm rank}(B) \leq k-1$ for steps $k=1,2,3,4$.}
     \label{fig:ex1}
        \end{figure}
How well does this work?
Figure \ref{fig:ex1}
shows an example. Here $N=20$, $p=10$ and there is a rank one signal of moderate
size. The top panels show quantile-quantile plots
of the p-values for the sequential {\em Kac-Rice test} versus the uniform distribution.
We see that the p-values are small when the alternative hypothesis $H_{1,1}$ is true (step 1), and then fairly uniform for testing rank $\leq 1$ versus $> 1$ (step 2) as desired, which is the first case in which the null hypothesis is true.
However the test becomes more and more conservative for higher steps, although the p-values are generated under the null distributions. The conservativeness of these p-values can lead to potential loss of power.
One of the reasons for this conservativeness is that, at step $k=3$, for example, the test does not consider that the two largest singular values have been removed. The test instead plugs the 3rd largest singular value into the place of the 1st largest singular value. As the 1st and the 3rd largest singular values do not have the same distribution, the sequential {\em Kac-Rice test} at $k=3$ is no longer uniformly distributed, and so results in conservative p-values.

        The  plots in the bottom panel come from our proposed {conditional singular value
(CSV)} test, to be described in Section \ref{subsec:csv}. It follows the uniform distribution quite well for all null steps. 
For testing $H_{0,k}: \Lambda_{k}=0$ at the $k^{th}$ step, the {\tt CSV} method takes it into account that our interest is the $k^{th}$ signal by conditioning on the first $k-1$ singular values in its test statistic.

        Section \ref{subsec:csv} presents the {\tt CSV} test procedure in detail. 
        In Section \ref{subsec:integcsv}, we propose an integrated version of the {\tt CSV} which  has better  power.
        Simulation results of the proposed procedures are illustrated in  Section \ref{subsec:hyptest_sim}.
\subsubsection{{The Conditional Singular Value test}\label{subsec:csv}}

        In this section, we introduce a test in  which the test statistic has an ``almost exact'' null distribution under $H_{0,k}$ in (\ref{generalTest}) for $k \in \{ 1, \cdots, p-1\}$. Writing the singular value decomposition of a signal matrix $B$ as $B=U_BD_BV_B^{T}$, and submatrices of a $N \times p$ matrix $X$ by $X = \left[ X^{r} | X^{-r}\right]$ where $X^{r} \in \mathbb{R}^{N \times r}$ and $X^{-r} \in \mathbb{R}^{N \times (p-r)}$, the test of (\ref{generalTest}) can be rewritten as 
        \begin{eqnarray}
                \label{hyp:exact2}
          H_{0,k}: U_{B}^{-(k-1)}U_{B}^{-(k-1)^T} B V_{B}^{-(k-1)}V_{B}^{-(k-1)^T} = 0_{N \times p} \nonumber \\
       \mbox{versus } H_{1,k}: U_{B}^{-(k-1)}U_{B}^{-(k-1)^T} B V_{B}^{-(k-1)}V_{B}^{-(k-1)^T} \neq 0_{N \times p}  
        \end{eqnarray}          
since $U_{B}^{-(k-1)}U_{B}^{-(k-1)^T} B V_{B}^{-(k-1)}V_{B}^{-(k-1)^T} = {\rm diag}(0, \cdots, 0, \Lambda_k, \cdots, \Lambda_p)$.

        As an alternative to (\ref{generalTest}) or (\ref{hyp:exact2}), we derive an exact test statistic for the following hypothesis:
        \begin{eqnarray}
              \label{hyp:exact}
               H_{0,k}: U^{-(k-1)}U^{-(k-1)^T} B V^{-(k-1)}V^{-(k-1)^T} = 0_{N \times p}  \\
       \mbox{versus } H_{1,k}: U^{-(k-1)}U^{-(k-1)^T} B V^{-(k-1)}V^{-(k-1)^T} \neq 0_{N \times p} \nonumber
        \end{eqnarray}
where we write the singular value decomposition of the data matrix $Y$ as $Y=UDV^{T}$. The test (\ref{hyp:exact}) investigates whether there remains signals in the residual space of $Y$ of which the first $k-1$ singular values are removed. Under $H_{0,k}$ of (\ref{hyp:exact2}), when we have strong signals in $B$, $U^{-(k-1)}U^{-(k-1)^T} B V^{-(k-1)}V^{-(k-1)^T}$ approaches $0_{N \times p}$ as in (\ref{hyp:exact}).
 
        The proposed test procedure  is as follows:
        \begin{test}[{Conditional Singular Value test}]
        With a given level $\alpha$, and the following test statistic,
        \begin{eqnarray}
                \label{csv} 
           \mathbb{S}_{k,0} = \frac{\int^{d_{k-1}}_{d_k}  e^{-\frac{z^2}{2\sigma^2}} z^{N-p} \prod_{j \neq k}^{p} |z^2 - d_j^2|    dz}
                {\int^{d_{k-1}}_{d_{k+1}} e^{-\frac{z^2}{2\sigma^2}} z^{N-p} \prod_{j \neq k}^{p} |z^2 - d_j^2|  dz},
        \end{eqnarray}
       where $d_0=\infty$, we reject $H_{0,k}$ if $\mathbb{S}_{k,0} \leq \alpha$ and accept $H_{0,k}$ otherwise.
    \end{test}
Analogous to (\ref{testStatistic}), $\mathbb{S}_{k, 0}$ plays the role of a p-value. It  compares the relative size of $d_k$ ranging between $(d_{k+1}, d_{k-1})$, and a small value of $\mathbb{S}_{k, 0}$ implies a  large value of $d_k$,  supporting the hypothesis $H_{1,k}: \Lambda_{k} > 0$.
We refer to this procedure as the  {\tt conditional singular value test (\tt CSV)}.   
Theorem \ref{lem:csv} shows that this test is exact under (\ref{hyp:exact}). The test statistic $\mathbb{S}_{k,0}$ is a conditional survival function of the $k^{th}$ singular value under $H_{0,k}$: the probability of observing larger values of the $k^{th}$ singular value than the actually observed $d_k$, given $U^{(k-1)}, V^{(k-1)}$ and all the other singular values.
The proofs of this and other results are given in the Appendix.
        \begin{thm}
                \label{lem:csv} 
                  \begin{eqnarray*}
                {\it If} \; U^{-(k-1)}U^{-(k-1) ^T} B V^{-(k-1)}V^{-(k-1)^T} &=& 0_{N \times p},
                 \end{eqnarray*}
  \hskip 1.5in   {\it then}  \; $\mathbb{S}_{k,0} \sim {\rm Unif}(0,1).$

        \end{thm}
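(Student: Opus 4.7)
The plan is to identify $\mathbb{S}_{k,0}$ as the conditional survival function of $d_k$ evaluated at the observed value $d_k$, where the conditioning sigma-algebra is generated by $U^{(k-1)}$, $V^{(k-1)}$, and the other singular values $\{d_j : j \neq k\}$. Once the corresponding conditional density of $d_k$ is identified, uniformity follows from the probability integral transform.

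First, I would invoke the James (1964) change-of-variables formula to write the joint density of the SVD components of $Y \sim N(B, \sigma^2 I_N \otimes I_p)$ as
$$p(U,D,V) \;\propto\; \exp\!\left(-\frac{\|Y-B\|_F^2}{2\sigma^2}\right)\, \prod_{i=1}^{p} d_i^{N-p} \prod_{i<j} |d_i^2 - d_j^2|,$$
understood against invariant measure on the appropriate Stiefel manifolds and Lebesgue measure on the ordered simplex $d_1 \geq \cdots \geq d_p \geq 0$.

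Second, I would use the null hypothesis to strip the exponent of any dependence on unknown parameters beyond the conditioning variables. Because $U^{-(k-1)}$ and $V^{-(k-1)}$ have orthonormal columns, the identity $U^{-(k-1)} U^{-(k-1)^T} B V^{-(k-1)} V^{-(k-1)^T} = 0$ is equivalent to $U^{-(k-1)^T} B V^{-(k-1)} = 0$. Decomposing $Y = U^{(k-1)} D^{(k-1)} V^{(k-1)^T} + U^{-(k-1)} D^{-(k-1)} V^{-(k-1)^T}$ and applying this orthogonality yields
$$\mathrm{tr}(B^T Y) \;=\; \mathrm{tr}\!\left(B^T U^{(k-1)} D^{(k-1)} V^{(k-1)^T}\right),$$
which depends only on $(U^{(k-1)}, V^{(k-1)}, d_1,\ldots,d_{k-1})$. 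Substituting into $\|Y-B\|_F^2 = \sum_j d_j^2 - 2\,\mathrm{tr}(B^T Y) + \|B\|_F^2$, the conditional joint density of $(U^{-(k-1)}, V^{-(k-1)}, d_k,\ldots,d_p)$ given $(U^{(k-1)}, V^{(k-1)}, d_1,\ldots,d_{k-1})$ reduces, up to a factor depending only on the conditioning variables, to
$$\exp\!\left(-\frac{\sum_{j\geq k} d_j^2}{2\sigma^2}\right)\, \prod_{i=1}^{p} d_i^{N-p} \prod_{i<j} |d_i^2 - d_j^2|,$$
which contains no unknown parameters.

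Third, I would marginalize out $U^{-(k-1)}$ and $V^{-(k-1)}$ (which are absent from the displayed density and hence uniform on the relevant Stiefel manifolds), split $\prod_{i<j}|d_i^2-d_j^2|$ into factors involving $d_k$ and factors not involving $d_k$, and then condition further on $\{d_j : j\neq k\}$. What remains is the conditional density of $d_k$ on the admissible interval $(d_{k+1}, d_{k-1})$ (with $d_0 := \infty$), which is proportional to the integrand of $\mathbb{S}_{k,0}$. Since $\mathbb{S}_{k,0}$ is precisely the ratio of the tail integral above $d_k$ to the full normalizing integral, it is the survival function of $d_k$ under its own conditional law, so $\mathbb{S}_{k,0} \sim \mathrm{Unif}(0,1)$.

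The main obstacle is the measure-theoretic bookkeeping in the first two steps: one must carefully handle the non-uniqueness of the SVD (sign choices for singular vectors and tie-breaking among singular values), and verify that conditioning on $(U^{(k-1)}, V^{(k-1)})$ — quantities that are themselves functions of $Y$ — is well posed as a regular conditional distribution derived from the explicit James density. Once the density factorization is rigorously in hand, the remainder is a routine probability integral transform.
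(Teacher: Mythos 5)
Your proposal is correct and follows essentially the same route as the paper's own proof: both derive the joint SVD density, use the null condition to kill the cross-term $\mathrm{tr}\bigl(B^{T}U^{-(k-1)}D^{-(k-1)}V^{-(k-1)^{T}}\bigr)$ in the exponent (the paper packages this as a Radon--Nikodym derivative against the $B=0$ law in its Lemma~\ref{lem:general_result}), and then identify $\mathbb{S}_{k,0}$ as the conditional survival function of $d_k$ given $U^{(k-1)}$, $V^{(k-1)}$, and the remaining singular values, concluding by the probability integral transform. Your explicit attention to the measure-theoretic bookkeeping is a point the paper largely elides, but the mathematical content is the same.
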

        The bottom panels of Figure  \ref{fig:ex1} confirm the claimed Type I error
property of the procedure. After the true rank of one, the p-values are all close to uniform.
\subsubsection{{The Integrated Conditional Singular Value} test\label{subsec:integcsv}}
        As a potential improvement of the {\tt CSV}, we introduce an integrated version of $\mathbb{S}_{k,0}$. Our aim is to achieve  higher power in detecting signals in $B$ compared to the ordinary {\tt CSV}.  Here we integrate out $\mathbb{S}_{k,0}$ with respect to $p-k$ small singular values ($d_{k+1}, \cdots, d_{p}$). The resulting statistic becomes a function of 
$d_1, \cdots, d_k$, only the first $k$ singular values of $Y$,
while the ordinary {\tt CSV} test statistic $\mathbb{S}_{k, 0}$ is a function of all the singular values of $Y$.  The idea is that conditioning on less can lead  to greater power.
        The suggested test statistic is as follows:
                \begin{eqnarray}
                \label{integ_csv}
                \mathbb{V}_{k,0} = \frac{\int^{d_{k-1}}_{d_{k}}  g(y_{k}; d_1, \cdots, d_{k-1})    dy_{k}}
                {\int^{d_{k-1}}_{0}   g(y_{k}; d_1, \cdots, d_{k-1})  dy_{k}},
        \end{eqnarray}
where
\begin{footnotesize}
        \begin{eqnarray}
        \label{gfun}
		g(y_{k};d_1, \cdots, d_{k-1}) = \int \cdots \int \prod_{i=k}^{p}\left( e^{-\frac{y_{i}^2}{2 \sigma^2}} y_{i}^{N-p}\right) \left(\prod_{i=k}^{p}\prod_{j>i} (y_{i}^2 - y_{j}^2)\right)  \\
		\cdot \left(\prod_{i=1}^{k-1} \prod_{j=k}^{p} (d_{i}^2 -y_{j}^2)\right) 1_{ \{ 0 \leq y_p \leq y_{p-1} \leq \cdots \leq y_{k} \leq d_{k-1} \} }  d{y_{k+1}}\cdots d{y_{p}}. \nonumber
	\end{eqnarray}
        \end{footnotesize}
        Our proposed {\tt Integrated Conditional Singular Value (ICSV)} test  is as follows:
        \begin{test}[{\tt ICSV test}]
        With a given level $\alpha$, we reject $H_{0,k}$ if $\mathbb{V}_{k,0} \leq \alpha$ and accept $H_{0,k}$ otherwise, where $\mathbb{V}_{k,0}$ is as defined in (\ref{integ_csv}).
        \end{test}
        As in the ${\tt CSV}$ test, $\mathbb{V}_{k,0}$ works as a p-value for the test, and also it is an exact test for (\ref{hyp:exact}), as is shown in Theorem \ref{lem:integrated_csv}. It is a survival function of the $k^{th}$ singular value given $U^{(k-1)}, V^{(k-1)}$ and the first $k-1$ singular values under $H_{0,k}$. 
        \begin{thm} \
                \label{lem:integrated_csv} 
                  \begin{eqnarray*}
                {\it If} \; U^{-(k-1)}U^{-(k-1) ^T} B V^{-(k-1)}V^{-(k-1)^T} &=& 0_{N \times p},
                 \end{eqnarray*}
  \hskip 1.5in   {\it then}  \; $\mathbb{V}_{k,0} \sim {\rm Unif}(0,1)$

        \end{thm}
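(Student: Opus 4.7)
The plan is to identify $\mathbb{V}_{k,0}$ as the conditional survival function of $d_k$ given $(d_1,\ldots,d_{k-1})$ and $(U^{(k-1)}, V^{(k-1)})$ under $H_{0,k}$, and then apply the probability integral transform. This parallels the proof of Theorem \ref{lem:csv}, except that we condition on strictly less information, so we must first marginalize over $(d_{k+1},\ldots,d_p)$.

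First, I would recycle the density computation that drives Theorem \ref{lem:csv}. Conditioning on $(U^{(k-1)}, V^{(k-1)})$ and assuming $H_{0,k}$, the residual $U^{-(k-1)T} Y V^{-(k-1)}$ is a centered Gaussian $(N-k+1)\times(p-k+1)$ matrix, so by James (1964) the joint conditional density of the ordered singular values $(d_k,\ldots,d_p)$, evaluated at $(y_k,\ldots,y_p)$ with $d_{k-1} > y_k > \cdots > y_p \geq 0$, is proportional to
\[
\prod_{i=k}^{p} e^{-y_i^2/(2\sigma^2)} y_i^{N-p} \prod_{k \le i < j \le p} (y_i^2 - y_j^2) \prod_{i=1}^{k-1}\prod_{j=k}^{p} (d_i^2 - y_j^2),
\]
where the last factor is the ``fixed--free'' portion of the full Vandermonde $\prod_{i<j}(y_i^2 - y_j^2)$ that couples the frozen $d_i$ to the free $y_j$; the ``fixed--fixed'' block $\prod_{1 \le i < j \le k-1}(d_i^2 - d_j^2)$ is absorbed into the normalizing constant.

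Second, I would integrate out $(y_{k+1}, \ldots, y_p)$ over the simplex $0 \le y_p \le \cdots \le y_{k+1} \le y_k$. Inspection of (\ref{gfun}) shows that the resulting marginal density of $d_k$, conditional on $(d_1,\ldots,d_{k-1}, U^{(k-1)}, V^{(k-1)})$, is proportional to $g(y_k; d_1,\ldots,d_{k-1}) \cdot 1_{\{0 \le y_k \le d_{k-1}\}}$. The numerator of $\mathbb{V}_{k,0}$ then equals $\Pr(d_k' > d_k \mid d_1,\ldots,d_{k-1}, U^{(k-1)}, V^{(k-1)})$ multiplied by the normalizing constant that is precisely the denominator. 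Hence $\mathbb{V}_{k,0}$ is the conditional survival function of $d_k$ evaluated at its realized value. Since this conditional distribution is continuous, the probability integral transform yields $\mathbb{V}_{k,0} \mid (d_1,\ldots,d_{k-1}, U^{(k-1)}, V^{(k-1)}) \sim \mbox{Unif}(0,1)$, and averaging over the conditioning variables gives the unconditional claim.

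The main obstacle will be the density bookkeeping in the first two steps: splitting the full Vandermonde $\prod_{i<j}(y_i^2 - y_j^2)$ into ``fixed--fixed'', ``fixed--free'', and ``free--free'' blocks, absorbing the first into the normalizing constant, and matching the rest term by term against the integrand of (\ref{gfun}). One must also confirm the sign conventions: the factor $d_i^2 - y_j^2$ for $i<k\le j$ is positive on the domain of integration because $d_i \ge d_{k-1} > y_j$, and $y_i^2 - y_j^2 \ge 0$ for $k \le i < j$ by the imposed ordering, so no absolute values are needed in $g$. Once this identification is complete, uniformity follows immediately from the probability integral transform, exactly as in the proof of Theorem \ref{lem:csv}.
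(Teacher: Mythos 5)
Your proposal follows essentially the same route as the paper: under the null, identify the conditional law of $(y_k,\ldots,y_p)$ given $(D^{(k-1)},U^{(k-1)},V^{(k-1)})$ with the $B=0$ conditional law, marginalize over $(y_{k+1},\ldots,y_p)$ to recover $g$ in (\ref{gfun}) as the conditional density of $y_k$, and conclude by the probability integral transform. One caution: your stated justification for the first step --- that $U^{-(k-1)T}YV^{-(k-1)}$ is conditionally a centered Gaussian matrix to which James (1964) applies --- is not the actual mechanism (that matrix is just ${\rm diag}(d_k,\ldots,d_p)$, and a fresh Gaussian block would not produce the coupling factors $\prod_{i<k\le j}(d_i^2-y_j^2)$); the correct device is the exponential-tilting identity of Lemma \ref{lem:general_result}, whose tilt factor $e^{{\rm tr}(U^{-(k-1)}D^{-(k-1)}V^{-(k-1)T})^TB}$ is identically $1$ under the stated null, after which the conditional density you write down and the rest of your argument coincide with the paper's proof.
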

Figure \ref{fig:pow} demonstrates that  the {\tt ICSV} procedure achieves higher power than the ordinary {\tt CSV}.

In this paper, we use importance sampling to evaluate the integral in (\ref{integ_csv}) with samples drawn from the eigenvalues of a $(N-k+1) \times (p-k+1)$ Wishart matrix. As the computational cost increases sharply with large $p$,  we are currently unable to compute this test for $p$ beyond say 30 or 40. \  An interesting open problem is the numerical approximation of this integral, in order to scale the test to larger problems. We leave this as future work.
\subsubsection{Simulation Examples\label{subsec:hyptest_sim}}
       In this section, we present results of the {\tt CSV} and the {\tt ICSV} on simulated examples. We compare the performance of these proposed methods with those in \cite{kritchman2008determining} and \cite{muirhead}[Theorem 9.6.2] mentioned in Section \ref{sec:related}, which we refer as the {\em pseudorank} and the {\em Muirhead's method} respectively:
        \begin{test}[{\em Pseudorank}]
        With a given level $\alpha$, and following $\mu_{N,p}$ and $\sigma_{N,p}$,
         \begin{eqnarray*}
        \mu_{N,p} &=& \left(\sqrt{N-\frac{1}{2}} + \sqrt{p-\frac{1}{2}} \right)^{2} \\
        \sigma_{N,p} &=& \left(\sqrt{N-\frac{1}{2}} + \sqrt{p-\frac{1}{2}} \right)\left(\frac{1}{\sqrt{N-1/2}} + \frac{1}{\sqrt{p-1/2}} \right)^{1/3},
        \end{eqnarray*}
  we reject $H_{0,k}$ if 
     \begin{eqnarray*}
        \frac{d_k^2 - \mu_{N,p-k}}{\sigma_{N,p-k}} > s(\alpha)
        \end{eqnarray*}
    where $s(\alpha)$ is the upper $\alpha$-quantile of the Tracy-Widom distribution. 
        \end{test}
        \begin{test}[{\em Muirhead's method}]
        With a given level $\alpha$, and $V_k$ defined as
        \begin{eqnarray*}
        V_k = \frac{(N-1)^{q-1}\prod_{i=k}^p d_{i}^2}{\left(\frac{1}{q} \sum_{i=k}^p d_i^2 \right)^{q}},
        \end{eqnarray*}
        we reject $H_{0,k}$ if 
        \begin{eqnarray*}
        -\left(N-k-\frac{2q^2+q+2}{6q} + \sum_{i=1}^{k-1} \frac{\bar{l}_{q}^2}{ \left( d_i^2 - \bar{l}_{q} \right)^2 } \right) \log V_k > \chi^2_{(q+2)(q-1)/2}(\alpha)
        \end{eqnarray*}
        where $q=p-k+1$, $\bar{l}_q = \sum_{i=k}^{p} d_i^2 / q$ and $\chi^2_m(\alpha)$ denotes the upper $\alpha$ quantile of the $\chi^2$ distribution with degree $m$.
        \end{test}
        We investigate cases with i.i.d Gaussian noise entries with $\sigma^2=1$.
        The observed data matrix $Y^{N \times p}$ has the signal matrix $B$ formed as follows:
        \begin{eqnarray}
        \label{simsetting}
        B = U_B D_{B} V_{B}^T\mbox{ ,  } \Lambda_{i} = m \cdot i \cdot \sigma \sqrt[4]{Np} \cdot I_{ \{ i \leq {\rm rank}(B)\} } 
        \end{eqnarray}
where $D_{B} = {\rm diag}(\Lambda_1, \cdots, \Lambda_p)$ with $\Lambda_1 \geq \cdots \geq \Lambda_p$, and $U_B$, $V_B$ are rotation operators generated from a singular value decomposition of $N \times p$ random Gaussian matrix with $i.i.d.$ entries. 
The signals of $B$   increase linearly.
The constant $m$ determines the magnitude of the signals. From  $m=1$, a {\em phase transition} phenomenon is observed when ${\rm rank}(B)=1$ in which the expectation of the largest singular value of $Y$ starts to reflect the signal \citep{nadler2008finite}.

 We illustrate two cases of $p=10$ and $p=30$ with $N$ fixed to $N=50$. For both cases, we set $m=1.5$ and $\rm{rank}(B)=0,1,2,3$. For $p=10$ and $p=30$, we evaluate the procedure through $3000$ and $1000$ repetitions respectively.
 The true value of the noise level $\sigma^2=1$ is used for all testing procedures.
 
        Figures \ref{fig:pow} and \ref{fig:pow_large} present quantile-quantile plots of the expected (uniform) quantiles versus  the observed quantiles of p-values.
         Under $H_{1, k}$, the {\tt ICSV} test shows improved power compared to the {\tt CSV}, and  close to that of {\em pseudorank}. Both the {\tt CSV} and the {\tt ICSV} show stronger power than {\em Muirhead's test}.
         Under $H_{0,k}$, both the {\tt CSV} and the {\tt ICSV} quantiles nearly agree with the expected quantiles and provide almost exact p-value, as the theory predicts. {\em Pseudorank} estimation becomes strongly conservative for further steps, and the results of {\em Muirhead's test} depend on the size of $N$ and $p$. 
\begin{figure}[tp]
        \center
        \begin{tabular}{c}
                \includegraphics[height=1.5in, width=5in]{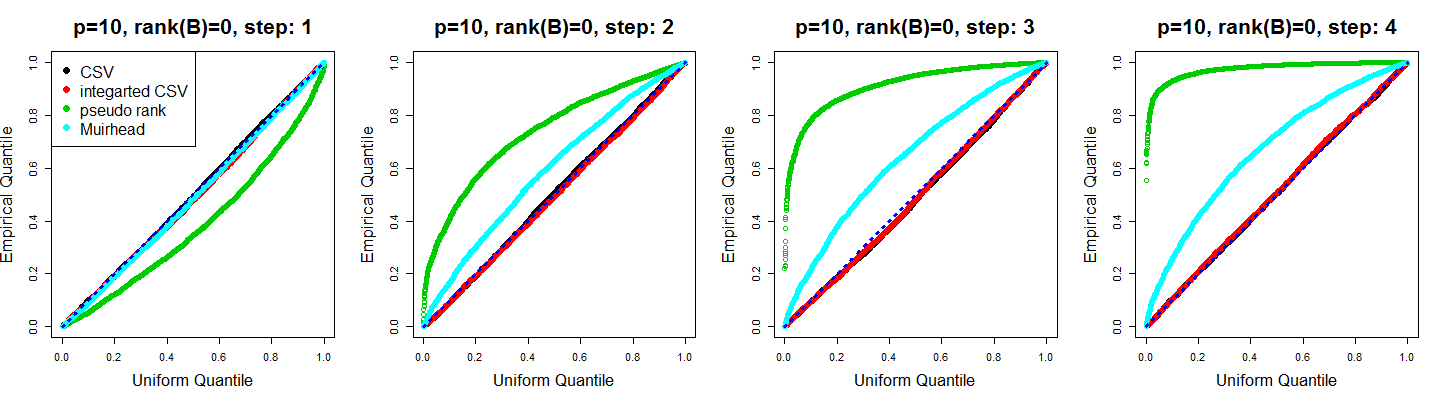} \\
                \includegraphics[height=1.5in, width=5in]{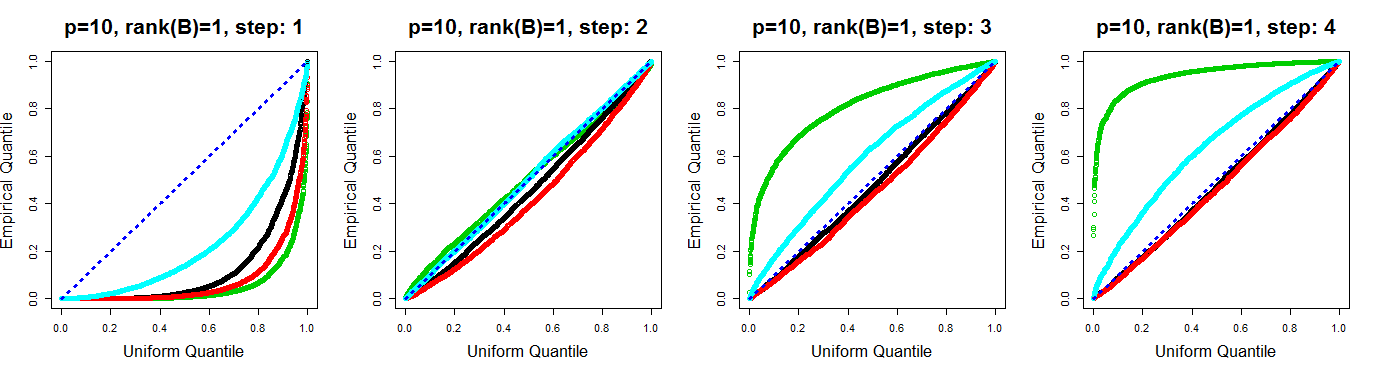} \\
                \includegraphics[height=1.5in, width=5in]{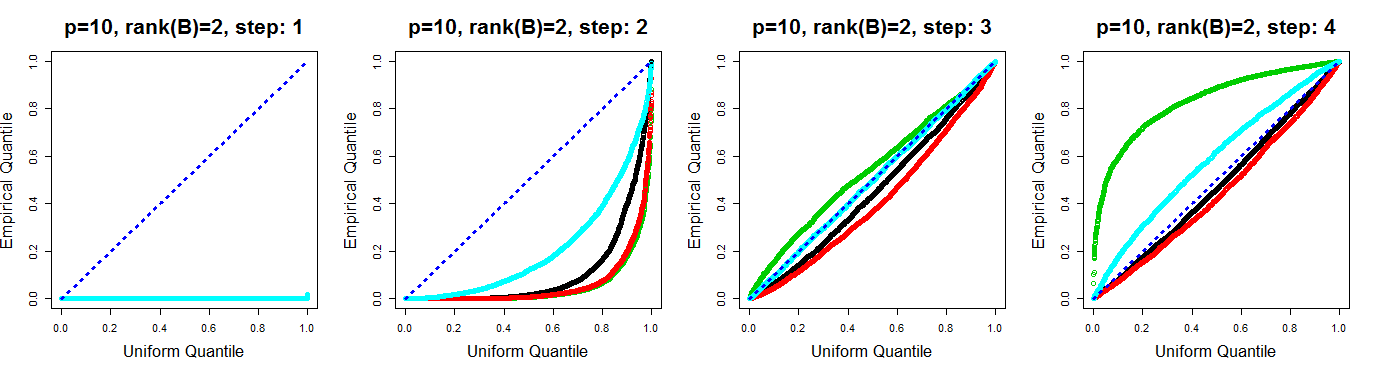} \\
                \includegraphics[height=1.5in, width=5in]{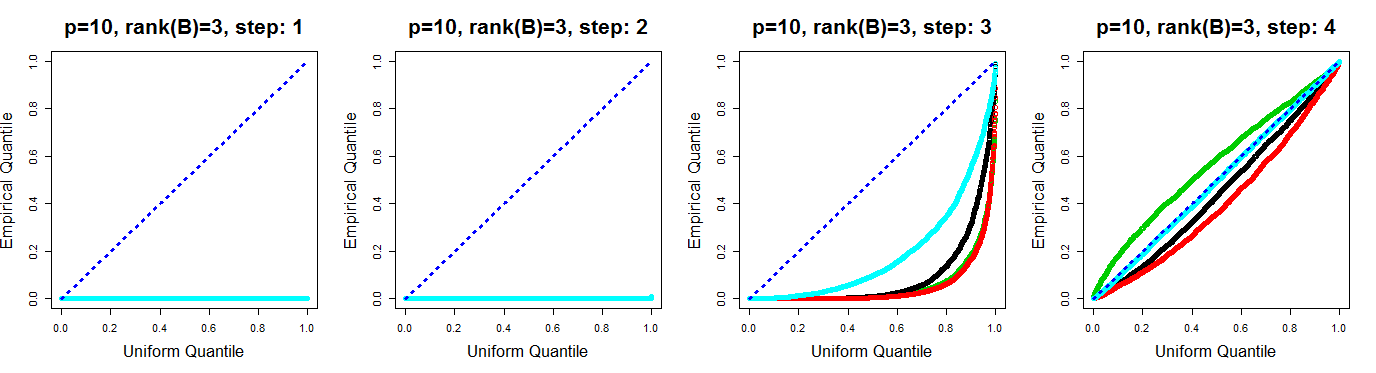} \\
        \end{tabular}
        \caption{Quantile-quantile plots of the  empirical quantile of p-values versus the uniform quantiles when $p=10$ at $m=1.5$. From the top to the bottom, each row represents the case of the true ${\rm rank}(B)$ from  0 to 3. The columns represent the results for testing $H_{0,1}$ to $H_{0,4}$ from the left to the right. \label{fig:pow}
        }
\end{figure}
\begin{figure}[tp]
        \center
        \begin{tabular}{c}
                \includegraphics[height=1.5in, width=5in]{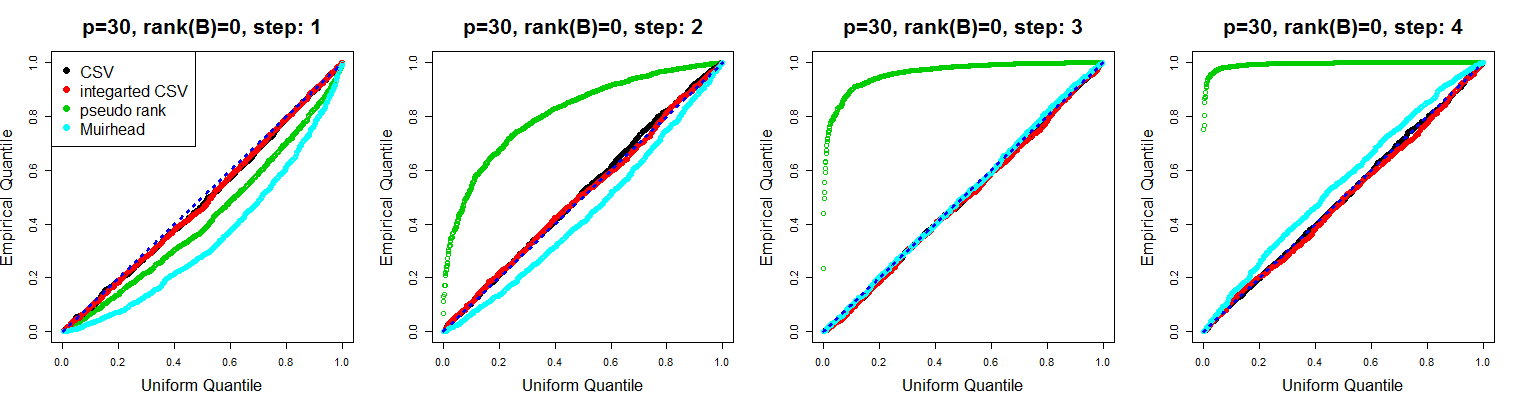} \\
                \includegraphics[height=1.5in, width=5in]{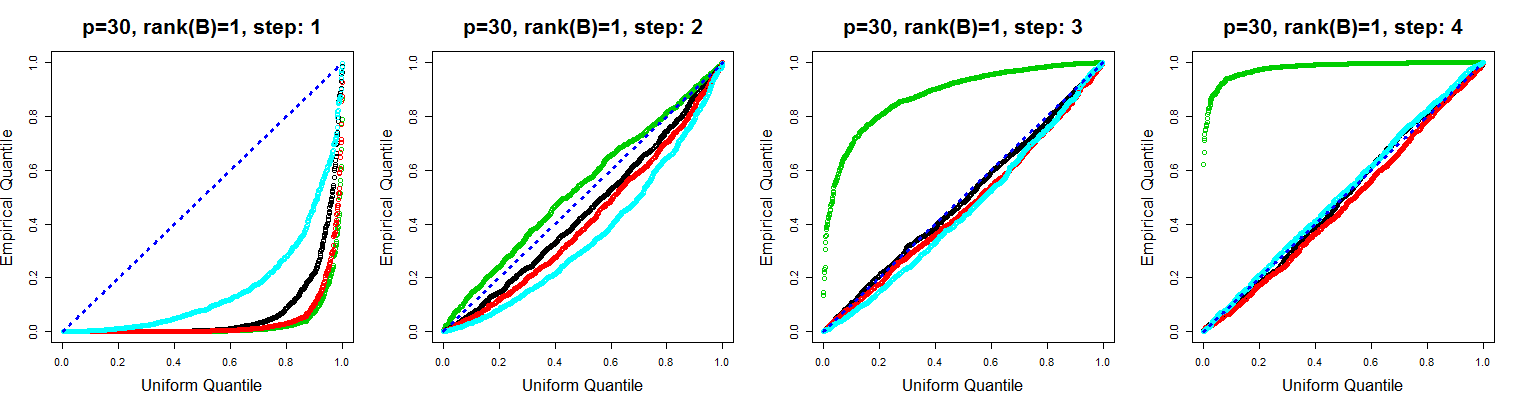} \\
                \includegraphics[height=1.5in, width=5in]{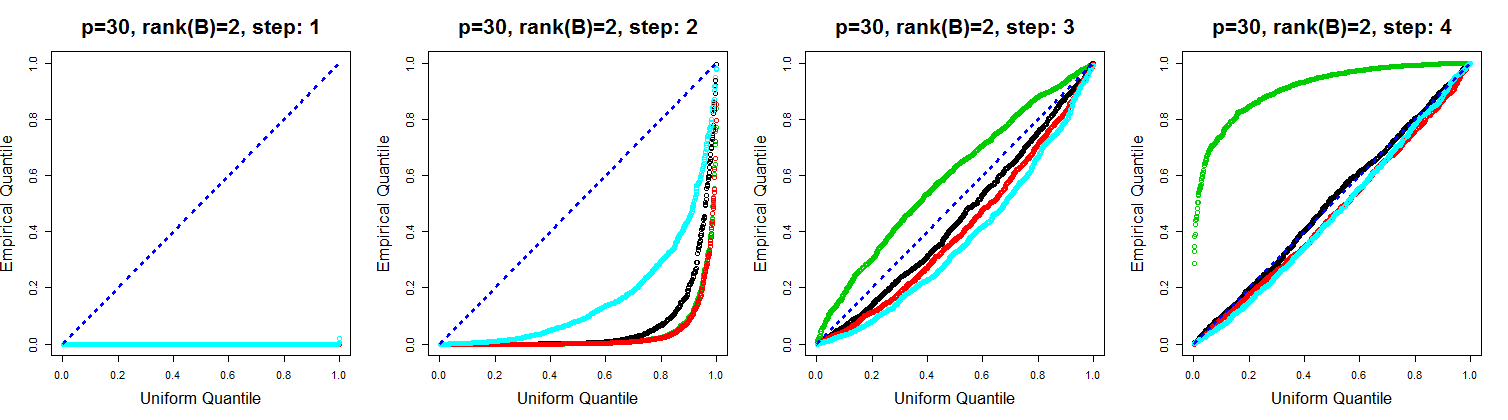} \\
                \includegraphics[height=1.5in, width=5in]{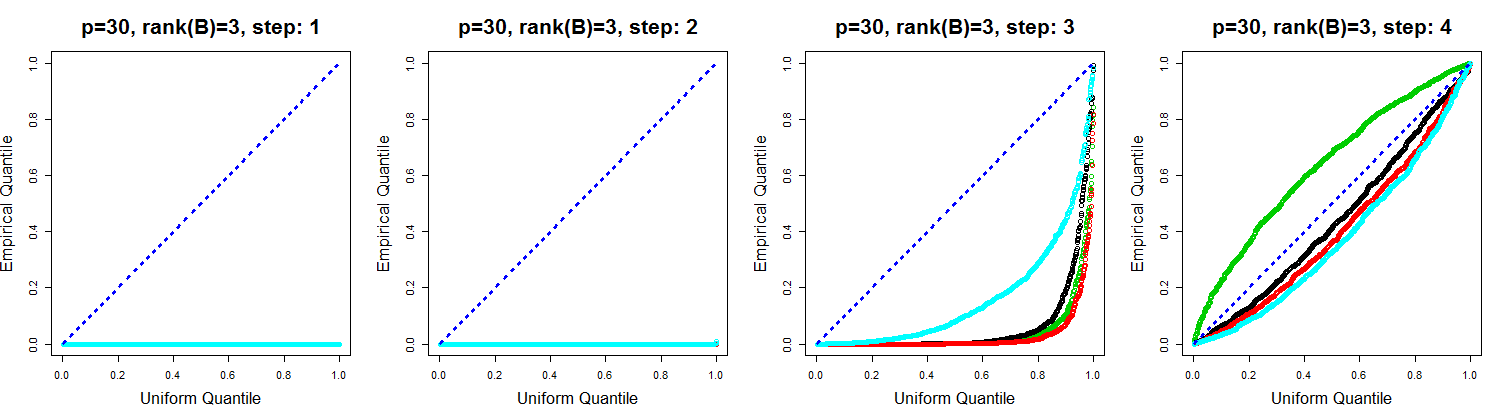} \\
        \end{tabular}
        \caption{Quantile-quantile plots of the  empirical quantile of p-values versus the uniform quantiles when $p=30$ at $m=1.5$. From the top to the bottom, each row represents the case of the true ${\rm rank}(B)$ from 0 to 3. The columns represent the results for testing $H_{0,1}$ to $H_{0,4}$ from the left to the right. \label{fig:pow_large}
        }
\end{figure}
\subsection{Confidence interval construction\label{sec:conf}}
        Here we generalize the exact confidence interval construction procedure of the largest singular value in (\ref{CI}) to the $k^{th}$ signal parameter for any $k=1, \cdots, p-1$.
        
We define the $k^{th}$ signal parameter $\tilde{\Lambda}_k$ as follows:
        \begin{eqnarray}
        \label{tildeLambda_k}
        \tilde{\Lambda}_{k} = \langle U_{k}V_{k}^{T}, B\rangle
        \end{eqnarray}
where $U_{k}$ and $V_{k}$ are the $k^{th}$ column vector of $U$ and $V$ respectively. We propose an approach to construct an exact level $\alpha$ confidence interval of $\tilde{\Lambda}_{k}$.
Our proposed procedure is as follows:
        \begin{eqnarray}
        \label{CI_k}
        CI_{k}(\mathbb{S}) &=& \{ \delta: \min\left( \mathbb{S}_{k,\delta}, 1-\mathbb{S}_{k,\delta} \right) > \alpha/2 \},
        \end{eqnarray}
where
        \begin{eqnarray*}
        \mathbb{S}_{k, \delta} &=& \frac{\int^{d_{k-1}}_{d_k}  e^{-\frac{(z-\delta)^2}{2\sigma^2}} z^{N-p} \prod_{j \neq k}^{p} |z^2 - d_j^2|    dz}
                {\int^{d_{k-1}}_{d_{k+1}} e^{-\frac{(z-\delta)^2}{2\sigma^2}} z^{N-p} \prod_{j \neq k}^{p} |z^2 - d_j^2|  dz}.
        \end{eqnarray*}
We can find the boundary points of $CI_k(\mathbb{S})$ using bisection.
Theorem \ref{exactCI} below shows that $CI_{k}(\mathbb{S})$ is an exact level $\alpha$ confidence interval. This procedure addresses  the general case of (\ref{csv}): $\mathbb{S}_{k,0}$ tests whether  $\tilde{\Lambda}_{k} = 0$.
\begin{thm}
        \label{exactCI}
        $\mathbb{S}_{k, \delta}$ is uniformly distributed when $\delta = \tilde{\Lambda}_k$.
\end{thm}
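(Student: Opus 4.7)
The plan is to identify the conditional density of $d_k$ given the full singular-vector data $(U, V)$ and the remaining singular values $\{d_j\}_{j \neq k}$, and then recognize $\mathbb{S}_{k, \tilde{\Lambda}_k}$ as the corresponding conditional survival function. Uniformity on $(0,1)$ then follows from the probability integral transform applied conditionally, and is preserved upon marginalization.

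\emph{Step 1.} Parametrize $Y = UDV^T$ and apply the standard Jacobian of the SVD change of variables (the formula of James, 1964, already invoked in Section \ref{subsec:global}). The joint density of $(D, U, V)$ under the $N(B, \sigma^2 I_N \otimes I_p)$ model factors as a Haar piece on $(U, V)$ times the Weyl-chamber density proportional to
\[
\prod_i d_i^{N-p} \prod_{i<j}(d_i^2 - d_j^2)\, \exp\!\left(-\tfrac{1}{2\sigma^2}\|Y - B\|_F^2\right)
\]
on $d_1 \geq \cdots \geq d_p \geq 0$.

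\emph{Step 2.} Condition on $(U, V)$ and $\{d_j\}_{j \neq k}$ and isolate the $d_k$-dependence. Expanding
\[
\|Y - B\|_F^2 = \sum_j d_j^2 - 2\sum_{j=1}^p d_j\,\langle U_j V_j^T, B\rangle + \|B\|_F^2,
\]
every term indexed by $j \neq k$, as well as $\|B\|_F^2$, is measurable with respect to the conditioning $\sigma$-algebra and hence constant. Only $d_k^2$ and the cross term $-2 d_k \tilde{\Lambda}_k$ remain, and together they assemble into $(d_k - \tilde{\Lambda}_k)^2$ up to an additive constant. Combining with the Jacobian and the ordering constraint $d_{k+1} \leq d_k \leq d_{k-1}$ gives the conditional density
\[
f(z) \propto e^{-(z-\tilde{\Lambda}_k)^2/(2\sigma^2)}\, z^{N-p} \prod_{j \neq k}|z^2 - d_j^2|, \qquad z \in (d_{k+1}, d_{k-1}).
\]

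\emph{Step 3.} The quantity $\mathbb{S}_{k, \tilde{\Lambda}_k}$ defined in (\ref{CI_k}) is exactly the conditional survival function $\int_{d_k}^{d_{k-1}} f(z)\,dz \,/\, \int_{d_{k+1}}^{d_{k-1}} f(z)\,dz$ of $d_k$ under this density. By the probability integral transform applied conditionally on $((U, V), \{d_j\}_{j \neq k})$, it is $\mathrm{Unif}(0, 1)$ distributed, and integrating out the conditioning variables preserves this distribution.

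The delicate point is Step 2: once we condition on the full orthogonal factors $U$ and $V$, the only surviving $d_k$-dependence of the exponent must come through $d_k^2$ and the cross-term $-2 d_k \tilde{\Lambda}_k$, so that the shifted square $(d_k - \tilde{\Lambda}_k)^2$ emerges cleanly into the conditional density. This is precisely where the specific choice $\delta = \tilde{\Lambda}_k = \langle U_k V_k^T, B\rangle$ is essential: any other $\delta$ would leave the exponent mis-centered and destroy uniformity. Modulo this identification, the theorem is the natural non-central extension of Theorem \ref{lem:csv}, obtained by replacing the null mean zero with the conditional mean $\tilde{\Lambda}_k$.
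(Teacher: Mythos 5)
Your proposal is correct and follows essentially the same route as the paper's own proof: both write the joint law of $(D,U,V)$ as an exponential tilt $e^{\mathrm{tr}((U D V^{T})^{T}B)}$ of the null spectral density (your explicit expansion of $\|Y-B\|_F^2$ is exactly this tilt), condition on $(U,V)$ and $\{d_j\}_{j\neq k}$ so that only the factor $e^{-(z-\tilde{\Lambda}_k)^2/(2\sigma^2)}$ survives in $z=d_k$, and identify $\mathbb{S}_{k,\tilde{\Lambda}_k}$ as the resulting conditional survival function, whence uniformity by the probability integral transform. No gaps; your remark that the specific choice $\delta=\tilde{\Lambda}_k=\langle U_kV_k^T,B\rangle$ is what centers the Gaussian factor is precisely the content of the paper's display (\ref{app:ci_integrand}).
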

        
                Figure \ref{fig:coverage} shows that the coverage rate of $CI_k(\mathbb{S})$. As expected,  the coverage rate of the true parameter is close to the target $1-\alpha$.
\subsubsection{Simulation studies of the confidence interval construction}
        We illustrate the coverage rates of $CI_{k}(\mathbb{S})$ in (\ref{CI_k}) on simulated data. The simulation settings are the same as in Section \ref{subsec:hyptest_sim} with $N=50$ and $p=10$. Figure \ref{fig:coverage} shows the coverage rate of the $95 \%$ confidence intervals for the first two signal parameters $\tilde{\Lambda}_1$ and $\tilde{\Lambda}_2$ in (\ref{tildeLambda_k}). 
Here, we vary $m$ in (\ref{simsetting}) from $0$ to $2$.
Large $m$ leads to large magnitude of true $\tilde{\Lambda}_1$ and $\tilde{\Lambda}_2$.
 
        Figure \ref{fig:coverage} shows that regardless of $k=1,2$, true ${\rm rank}(B)$, or the size of $\tilde{\Lambda}_k$, the constructed confidence intervals cover the parameters at the desired level.
\begin{figure}[tp]
\begin{center}
        \begin{tabular}{c}
                \includegraphics[scale=0.15]{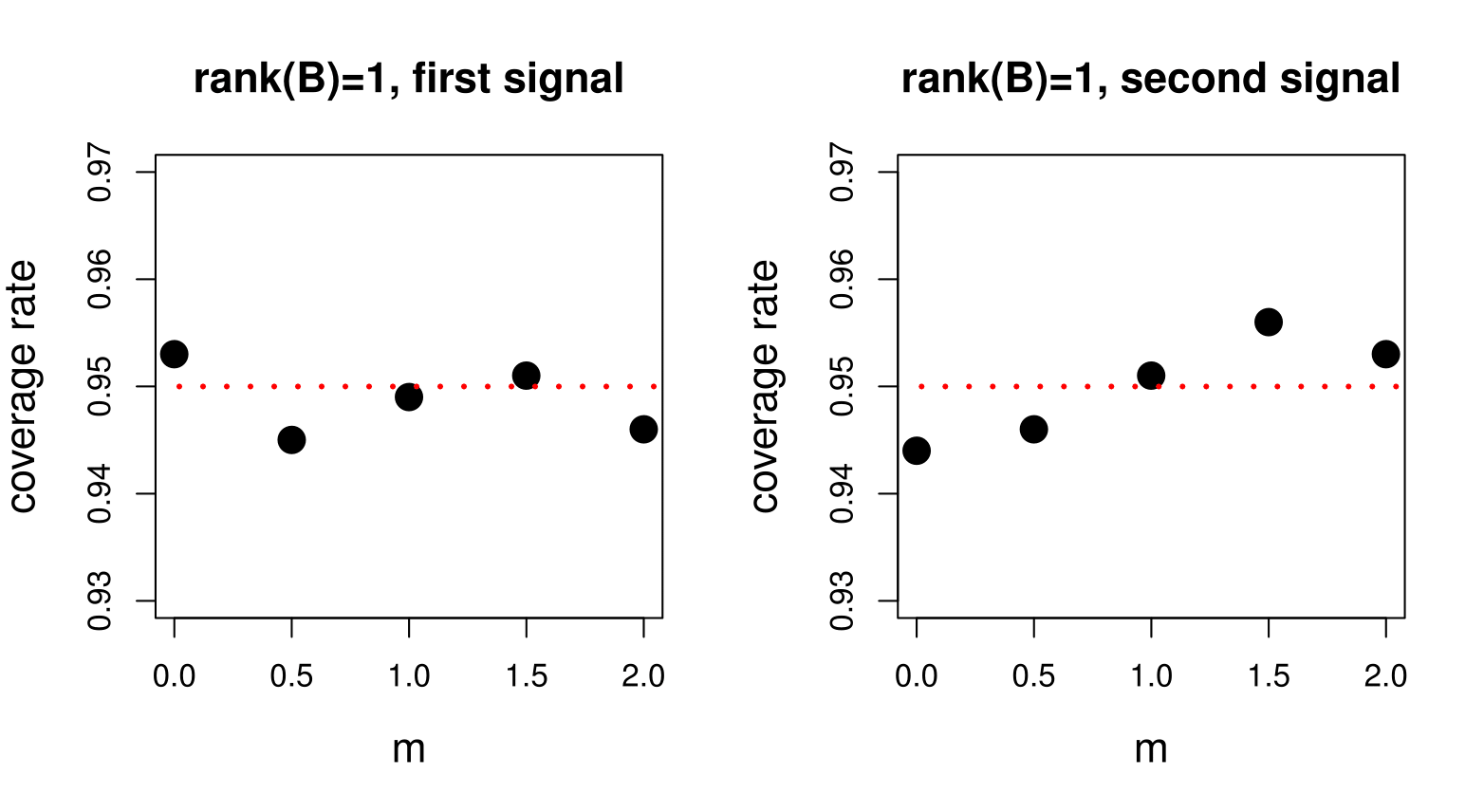} \\
                \includegraphics[scale=0.15]{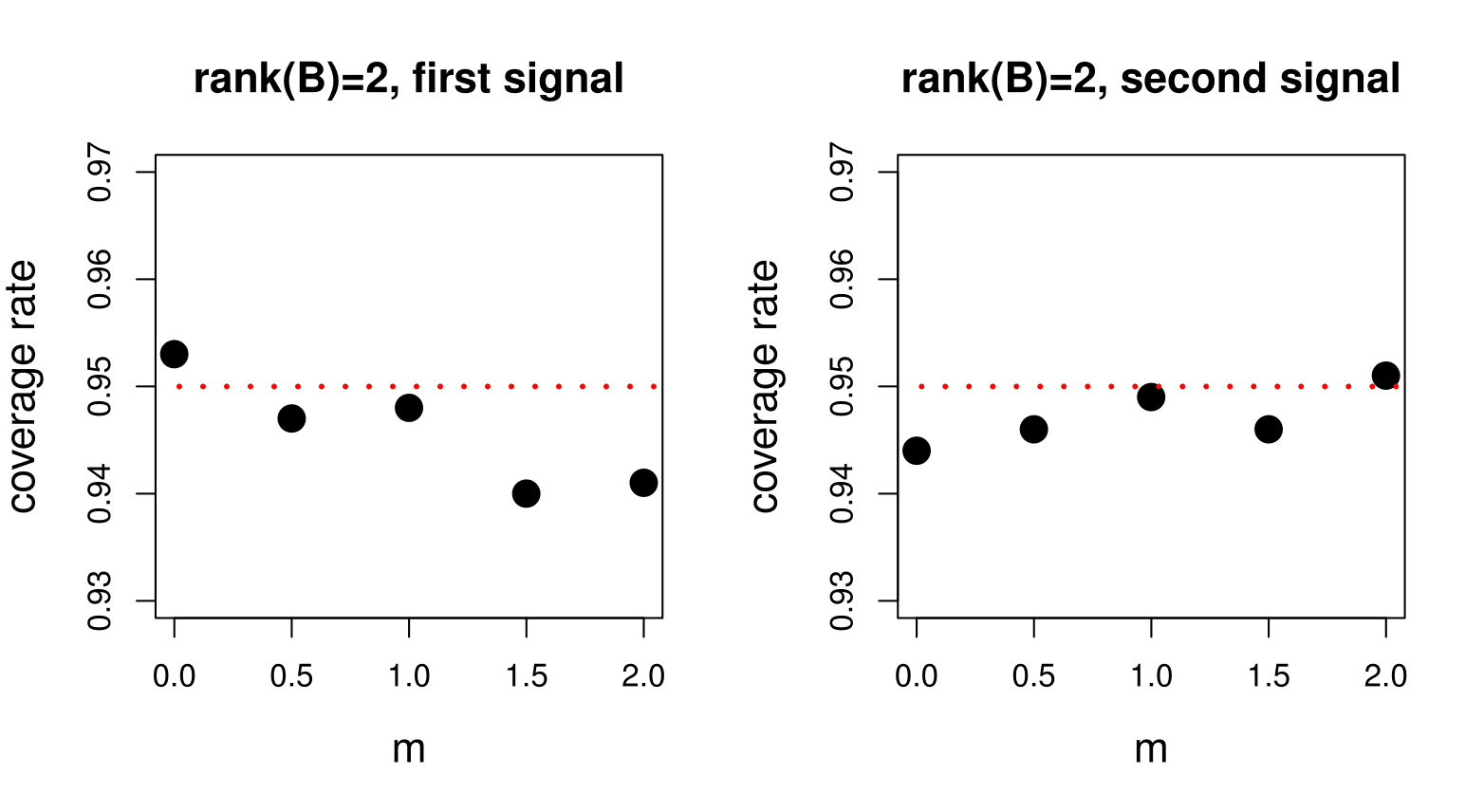} \\
                \includegraphics[scale=0.15]{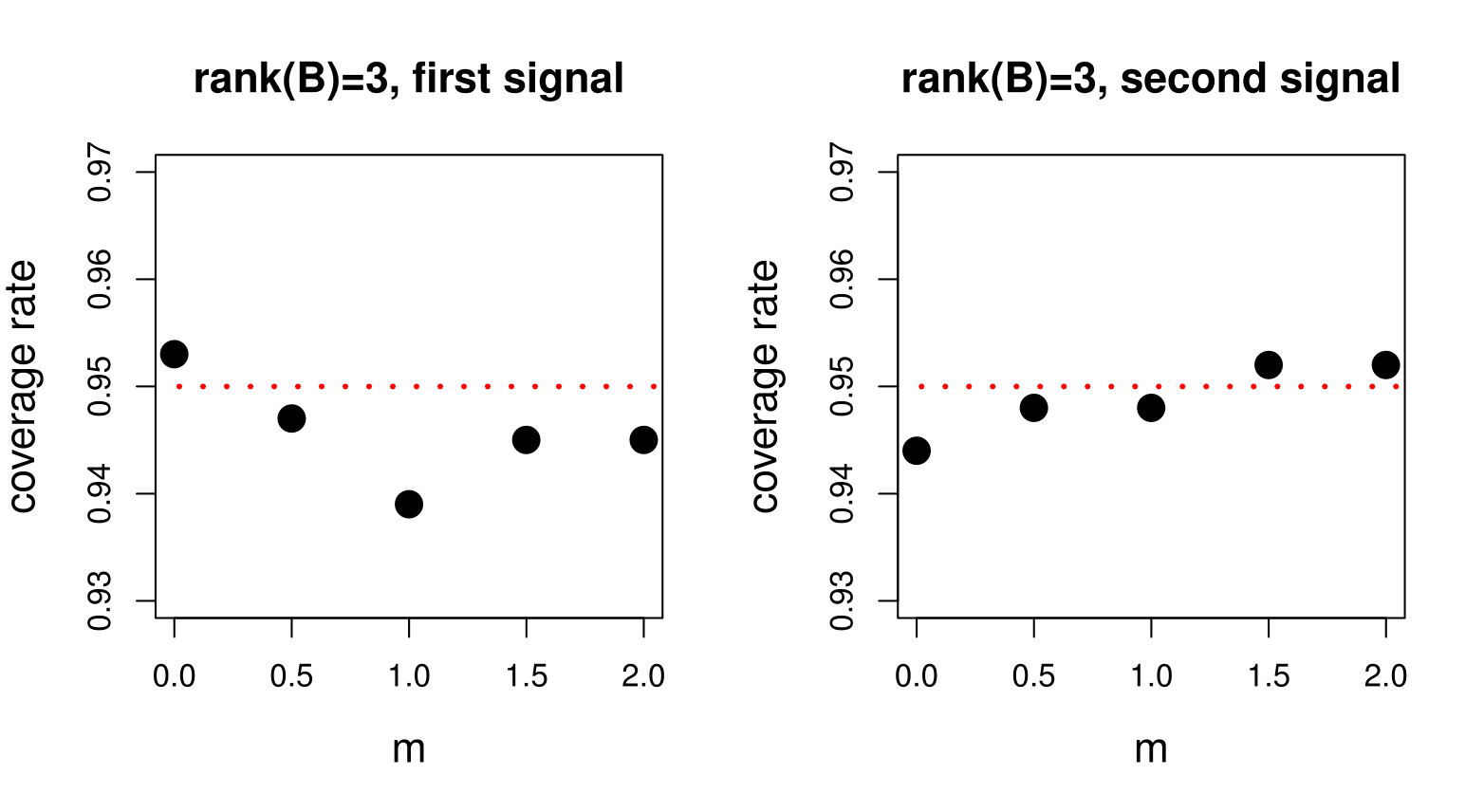} \\
        \end{tabular}
        \end{center}
        \caption{Coverage rate of $\tilde{\Lambda}_1$ and $\tilde{\Lambda}_2$ versus the magnitude $m$ in (\ref{simsetting}) with $N=50$ and $p=10$. The first and the second signal correspond to $\tilde{\Lambda}_1$ and $\tilde{\Lambda}_2$ respectively. Coverage rate denotes the proportion of times that the constructed confidence intervals from $CI_k(\mathbb{S})$ covered the true parameter. From the top to the bottom, each row represents the case of the true ${\rm rank}(B)$ being $0$ to $3$. The columns illustrate the results regarding $\tilde{\Lambda}_1$ and $\tilde{\Lambda}_2$ from the left to the right.\label{fig:coverage}}
\end{figure}
\section{Rank Estimation\label{sec:rank}}
        This section discusses the  selection of the number of principal components, or equivalently, the estimation of the true rank of $B$ under our model assumptions. For determining the rank of $B$, 
        here we investigate the {\em{StrongStop}} procedure \citep{g2013false}, applying it to the tests developed in Section \ref{sec:hypothesis} ({CSV} and {ICSV}).
 
        We determine the rank of $B$ based on our testing results on $H_{0,k}$, since $H_{0,k}$ explicitly tests the range of the true rank. Given the sequence of hypothesis $H_{0,k}:{{\rm rank}(B)\leq k-1}$ with $k=1, \cdots, p-1$, the rejection of these must be carried out in a sequential fashion such that once $H_{k,0}$ is rejected, all $H_{\gamma,0}$ for $\gamma \leq k$ should be rejected as well. Under such sequential testing framework of this kind, it is natural to choose ${\rm rank}(B)$ to be the largest $k$ that rejects $H_{0,k}$. The question here is how to choose the `stopping point' for rejection. 
        
        One of the simplest methods  is to choose the value  $k$ at which $H_{0,k}$ is rejected for the last time with a given level $\alpha$:
        \begin{eqnarray*}
                \hat{\kappa}_{simple} = \max \{k\in \{ 1, \cdots, p-1\}: p_k \leq \alpha \},
        \end{eqnarray*}
which we refer as {\em SimpleStop}.

In this paper, instead of {\em SimpleStop}, we use {\em StrongStop}.
This  procedure  takes sequential p-values as its input and controls family-wise error rate. When the p-values of the sequential tests are uniformly and independently distributed under the null, the {\em StrongStop} procedure controls the family-wise error rate under a given level of $\alpha$ \citep{g2013false}[Theorem 3]. 
        For  rank determination, by the nature of our hypothesis, $H_{0,\gamma}:{\rm rank}(B) \leq \gamma-1$ being true implies $H_{0,k}:{\rm rank}(B) \leq k-1$ also being true for all $k > \gamma$. The family-wise error rate control property in rank determination, therefore, becomes control of rank over-estimation with level $\alpha$ as follows:
        \begin{eqnarray*}
                \mathbb{P}( \hat{\kappa} > \kappa ) \leq \alpha
        \end{eqnarray*}
where $\hat{\kappa}$ denotes the selected ${\rm rank}(B)=\kappa$. The resulting procedure is as follows:
        \begin{eqnarray*}
        \hat{\kappa} = \max \left\lbrace k \in \{ 1, \cdots, p-1 \}: {\rm exp}\left( \sum_{j=k}^{p-1}\frac{\log p_j}{j}\right) \leq \frac{\alpha k}{p-1}\right\rbrace.
        \end{eqnarray*}
Here, $p_k$ denotes the value of either $\mathbb{S}_{k,0}$ of the {\tt CSV} or $\mathbb{V}_{k,0}$ of the {\tt ICSV}, and conventionally $\max ({\emptyset})=0$. The independence of p-values  from our  proposed
testing procedures for $H_{0,k}$ with $k > \kappa$ has yet been established; however {\em StrongStop} shows good performance for strong signals on simulated data.

        Figure \ref{fig:corr} illustrates the simulation results of {\em StrongStop} on p-values from the {\tt ICSV} procedure with $\alpha = 0.1$. The simulation setting is the same as in Section \ref{subsec:hyptest_sim} with $N=50$ and $p=10$. We compare {\em StrongStop} with {\em SimpleStop}.
From Figure \ref{fig:corr}, we observe that for smaller signals, {\em SimpleStop} tends to choose the correct rank of $B$ better than {\em StrongStop}. This can be explained by the strict overestimation control property of {\em StrongStop} which might lead to underestimation in low signal case.
For strong signals, {\em StrongStop} is  better at avoiding overfitting.
\begin{figure}[tp]
\begin{center}
        \begin{tabular}{c}
                \includegraphics[scale=0.3]{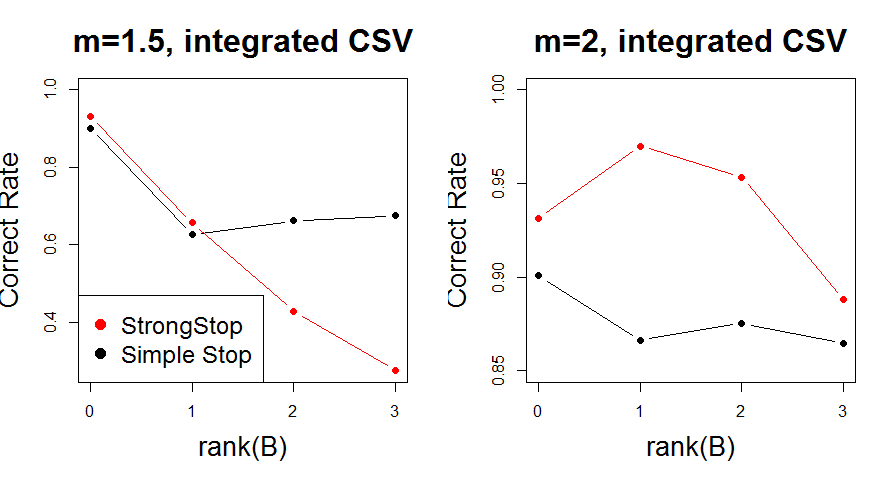} \\
        \end{tabular}
        \end{center}
        \caption{Rate of selecting the correct ${\rm rank}(B)$ versus ${\rm rank}(B)$ for the {\tt ICSV} procedure with $N=50$ and $p=10$. From the left to the right, $m$ increases from $1.5$ to $2$ where $m$ determines the magnitude of the signals in (\ref{simsetting}). The black dots and the red dots represent the results from {\em SimpleStop} and {\em StrongStop} respectively. For both stopping rules, $\alpha=0.1$ is used.\label{fig:corr}}
\end{figure}
\section{Estimating the Noise Level\label{sec:noise}}
       For the testing procedures {\tt CSV} and {\tt ICSV}, and confidence interval $CI_{k}(\mathbb{S})$, we have assumed that  the noise level $\sigma^2$ is known. In case the prior information of $\sigma^2$ is unavailable, the value of $\sigma^2$ needs to be estimated. In this section, we introduce a data-driven method for  estimating $\sigma^2$. 
        
        For the estimation of $\sigma^2$, it is popular to assume that the rank of $B$ is known. One of the simplest methods estimates $\sigma^2$ using mean of sum of squared residuals by
        \begin{eqnarray*}  
                \hat{\sigma}^2_{simple} = \frac{1}{N(p-\kappa)} \sum_{j=\kappa +1}^{p} d_{j}^2 
        \end{eqnarray*}
with known ${\rm rank}(B)=\kappa$.     
   
        Instead of using the rank of $B$, \cite{donoho2013optimal} use the median of the singular values of $Y$ as a robust estimator of $\sigma^2$ as follows:
        \begin{eqnarray}
                \label{medSigEst}
        \hat{\sigma}_{med}^2 = \frac{d_{med}^2}{N \cdot \mu_{\beta}}
        \end{eqnarray}
where $d_{med}$ is a median of  the singular values of $Y$ and $\mu_{\beta}$ is a median of a  Mar\v{c}enko-Pastur distribution with $\beta = N/p$. This estimator works under the assumption that ${\rm rank}(B) \ll \min(N, p)$.

        In this paper, we suggest three estimators, all of which make few assumptions.
        Our approach uses cross-validation and the sum of squared residuals as an extension of classical noise level estimator.
        For a fixed value of $\lambda$, we define our estimators $\hat{\sigma}^2_{\lambda}$, $\hat{\sigma}^2_{\lambda, df}$ and $\hat{\sigma}^2_{\lambda,df,c}$ as follows:
        \begin{eqnarray}
        \label{estSig}  
        \hat{\sigma}^2_{\lambda} &=& \frac{1}{N \cdot p} ||Y - \hat{B}_{\lambda}||^2_{F}  \nonumber\\
        \hat{\sigma}^2_{\lambda, df} &=& \frac{1}{N \cdot (p-df_{\hat{B}_{\lambda}})} ||Y - \hat{B}_{\lambda}||^2_{F}  \nonumber \\
        \hat{\sigma}^2_{\lambda, df, c} &=& \frac{1}{N \cdot (p-c \cdot df_{\hat{B}_{\lambda}})} ||Y - \hat{B}_{\lambda}||^2_{F} \mbox{ for } c \in [0,1]
        \end{eqnarray}
        with 
                \begin{eqnarray*}
                \hat{B}_{\lambda} &=& \underset{B \in \mathbb{R}^{N \times p}}{\mbox{argmin}} 
                \frac{1}{2} ||Y-B||_{F}^2 + \lambda ||B||_{*}  \\
        df_{\hat{B}_{\lambda}}  &=& \sum_{k=1}^{p} 1 \{ l_{\lambda, k} > 0\}
                \end{eqnarray*}                 
        where $l_{\lambda, k}$ denotes the $k^{th}$ singular value of $\hat{B}_{\lambda}$.
        The estimators $\hat{\sigma}^2_{\lambda}$ and $\hat{\sigma}^2_{\lambda, df}$ correspond to the ordinary mean  squared residual, with the latter accounting for the degrees of freedom \citep{reid2013study}.     
         With $c \in (0,1)$, the estimator $\hat{\sigma}^2_{\lambda, df, c}$ lies between $\hat{\sigma}^2_{\lambda}$ and $\hat{\sigma}^2_{\lambda, df}$. We use cross-validation for choosing the appropriate value of the regularization parameter $\lambda$.

        For these estimators, choosing an appropriate  value for the  regularization parameter $\lambda$ is important, since $\hat{B}_{\lambda}$ depends on $\lambda$. In penalized regression, it is common to use cross-validation for this purpose,  examining a grid of $\lambda$ values.
         Unlike the regression setting, however, here there is no outcome variable and thus it is not clear how to make predictions on left-out data points.
        
In this paper, we use {\em softImpute} algorithm \citep{mazumder2010spectral}. In the presence of missing values in a given data matrix, {\em softImpute}  carries out matrix completion with the following criterion:
        \begin{eqnarray}
                \label{softObj}
                \min_{B} \frac{1}{2}||P_{\Omega}(Y) - P_{\Omega}(B) ||_{F}^2 +\lambda||B||_{*}
        \end{eqnarray}
where $\Omega$ is an index set of observed data point with a function $P_{\Omega}(\cdot)$  such that $P_{\Omega}(Y)_{(i,j)} = Y_{i,j}$ if $(i, j) \in \Omega$ and $0$ otherwise. We define the prediction error of the unobserved values $\Omega$ as follows:
        \begin{eqnarray*}
                err_{\lambda}(\Omega) = || P_{\tilde{\Omega}}(Y) - P_{\tilde{\Omega}}(\hat{B}_{\lambda}^{S}(\Omega))||_{F}^{2}
        \end{eqnarray*} 
        where
            $\hat{B}_{\lambda}^{S}(\Omega)$ denotes the estimator of $B$ acquired from (\ref{softObj}) and  $\tilde{\Omega}$ denotes the index set of unobserved values ($\tilde{\Omega} = \Omega^c$).
        Using this prediction error, we carry out k-fold cross-validation, randomly generating $k$ non-overlapping leave-out sets of size $\frac{N \cdot p}{k}$ from $Y$. For a grid of $\lambda$ values, we compute the average of $err_{\lambda}$ for each $\lambda$ over the left-out data. We choose our $\lambda$ to be the minimizer of   the average $err_{\lambda}$ as in usual cross-validation (see e.g. \cite{hastie2009elements}). 
        \subsection{A study  of  noise level estimation}
         \label{subsec:sigSimulation}   
        We illustrate simulation examples of noise level estimation of the proposed methods $\hat{\sigma}^2_{\lambda}$, $\hat{\sigma}^2_{\lambda, df}$ and $\hat{\sigma}^2_{\lambda,df,c}$ in (\ref{estSig}) compared to $\hat{\sigma}^2_{med}$ in (\ref{medSigEst}). For the estimator $\hat{\sigma}^2_{\lambda,df,c}$, we use $c=\frac{2}{3}$ in ad-hoc. These approaches do not require predetermined knowledge of ${\rm rank}(B)$.

        Simulation settings are the same as in Section \ref{subsec:hyptest_sim} with $N=50$ and $p=10$. The true value of the noise level is $\sigma^2=1$ and for choosing $\lambda$, 20-fold cross-validation is used. Table \ref{tab:sigEst} illustrates the simulation results for the three proposed estimators.
\begin{table}[H]
        \caption{ 
        Simulation results for estimating the noise level $\sigma^2=1$ with $N=50$ and $p=10$. We vary ${\rm rank}(B)$ from $0$ to $3$, and $m$ from $0.5$ to $2.0$. Each column represents the mean estimated value of $\sigma^2$ (``{\bf Est}''), and standard error (``{\bf se}'') of the corresponding estimator.}
         
                \label{tab:sigEst} 
\begin{small}
        \hspace{-1.5cm}
        \begin{center}
        \begin{tabular}{c@{\extracolsep{0.5cm}}  c@{\extracolsep{0.2cm}} c@{\extracolsep{0.5cm}}  c@{\extracolsep{0.2cm}} c@{\extracolsep{0.5cm}}  c@{\extracolsep{0.2cm}} c@{\extracolsep{0.5cm}}  c@{\extracolsep{0.2cm}} c}
        \hline          
                \multirow{2}{*}{} & \multicolumn{2}{c}{$\hat{\sigma}^2_{\lambda_{CV}}$} & \multicolumn{2}{c}{$\hat{\sigma}^2_{\lambda_{CV}, df}$} & \multicolumn{2}{c}{$\hat{\sigma}^2_{\lambda_{CV},df,c}$} & \multicolumn{2}{c}{$\hat{\sigma}^2_{med}$} \\ \cline{2-3} \cline{4-5} \cline{6-7} \cline{8-9}
 \bf{m} & \bf Est & \bf se & \bf Est & \bf se & \bf Est & \bf se & \bf Est & \bf se \\ \hline
         \multicolumn{9}{c}{} \\[0.1pt]
         \multicolumn{9}{c}{${\rm rank}(B)=0$} \\
         0.0 & 0.863 & 0.230 & 0.990 & 0.230 & 0.926 & 0.210 & 0.996 & 0.084 \\
         \multicolumn{9}{c}{} \\[0.1pt]
         \multicolumn{9}{c}{${\rm rank} (B) = 1$} \\
         0.5 & 0.869 & 0.236 & 1.000 & 0.238 & 0.934 & 0.216 & 1.006 & 0.085 \\
         1.0 & 0.869 & 0.254 & 1.039 & 0.263 & 0.954 & 0.234 & 1.027 & 0.088 \\
         1.5 & 0.823 & 0.271 & 1.127 & 0.318 & 0.969 & 0.254 & 1.044 & 0.090 \\
         2.0 & 0.789 & 0.246 & 1.245 & 0.324 & 0.999 & 0.224 & 1.052 & 0.091 \\
         
         \multicolumn{9}{c}{} \\[0.1pt]
         \multicolumn{9}{c}{${\rm rank} (B) = 2$} \\
         0.5 & 0.871 & 0.260 & 1.061 & 0.280 & 0.962 & 0.237 & 1.039 & 0.089 \\
         1.0 & 0.784 & 0.262 & 1.321 & 0.355 & 1.025 & 0.244 & 1.093 & 0.096 \\
         1.5 & 0.700 & 0.288 & 1.611 & 0.431 & 1.052 & 0.201 & 1.121 & 0.100 \\
         2.0 & 0.646 & 0.211 & 1.762 & 0.471 & 1.047 & 0.196 & 1.134 & 0.102 \\
         
         \multicolumn{9}{c}{} \\[0.1pt]
         \multicolumn{9}{c}{${\rm rank} (B) = 3$} \\
         0.5 & 0.827 & 0.303 & 1.206 & 0.365 & 1.002 & 0.288 & 1.098 & 0.095 \\
         1.0 & 0.674 & 0.235 & 1.771 & 0.459 & 1.076 & 0.231 & 1.186 & 0.107 \\
         1.5 & 0.585 & 0.202 & 2.135 & 0.623 & 1.062 & 0.228 & 1.226 & 0.113 \\
         2.0 & 0.549 & 0.180 & 2.324 & 0.711 & 1.057 & 0.224 & 1.242 & 0.116 \\
         \hline
        \end{tabular} 
        \end{center} 
        \end{small}
\end{table}
        In this setting, $\hat{\sigma}^2_{\lambda_{CV}}$ decreases with larger ${\rm rank}(B)$ and signals while $\hat{\sigma}^2_{\lambda_{CV, df}}$ and $\hat{\sigma}^2_{med}$ increases. For large ${\rm rank}(B)$ and signals, $\hat{\sigma}^2_{\lambda_{CV, df, c}}$ shows good results, as compared to other methods.
        The poor performance of $\hat{\sigma}^2_{\lambda_{CV}, df}$ may be caused by the use of an improper  definition of $df$, the  degrees of freedom. 
        Following the definition of degrees of freedom by \cite{efron2004least}, our simulation result shows that the number of non-zero singular values does not coincide with degrees of freedom under our setting. Further investigation into the degrees of freedom is needed in future work.
        
        The competing method $\hat{\sigma}^2_{med}$ consistently shows a small standard deviation. However, with large ${\rm rank}(B)$, especially when ${\rm rank}(B) \geq p/2$, the procedures  over-estimates $\sigma^2$ due to the effect of the signals.
\section{Additional Examples\label{sec:sim}}
We discuss additional examples in this section. 
Section \ref{subsec:unknownSig} presents results of the proposed methods when the estimated noise level is used. 
In Section \ref{subsec:nonGaussian}, hypothesis testing results with non-Gaussian noise are illustrated. Section \ref{subsec:real_example} shows results on some real data.
\subsection{Simulation examples with unknown noise level\label{subsec:unknownSig}}
        In this section, we illustrate the results when estimated $\sigma^2$ value is used on simulated data. For the estimation of the noise level, we use $\hat{\sigma}^2_{\lambda_{CV}, df, c}$ and $\hat{\sigma}^2_{med}$ which showed good performance in Section \ref{subsec:sigSimulation}. As in Section \ref{subsec:sigSimulation}, for the estimator $\hat{\sigma}^2_{\lambda_{CV}, df, c}$, 20-fold cross-validation and $c=2/3$ is used. The simulation settings are the same as in Section \ref{subsec:hyptest_sim} with $N=50$ and $p=10$. We investigate the case of $m=1.5$.
\begin{figure}[p]
        \center
        \begin{tabular}{c}
                \includegraphics[height=1.5in, width=5in]{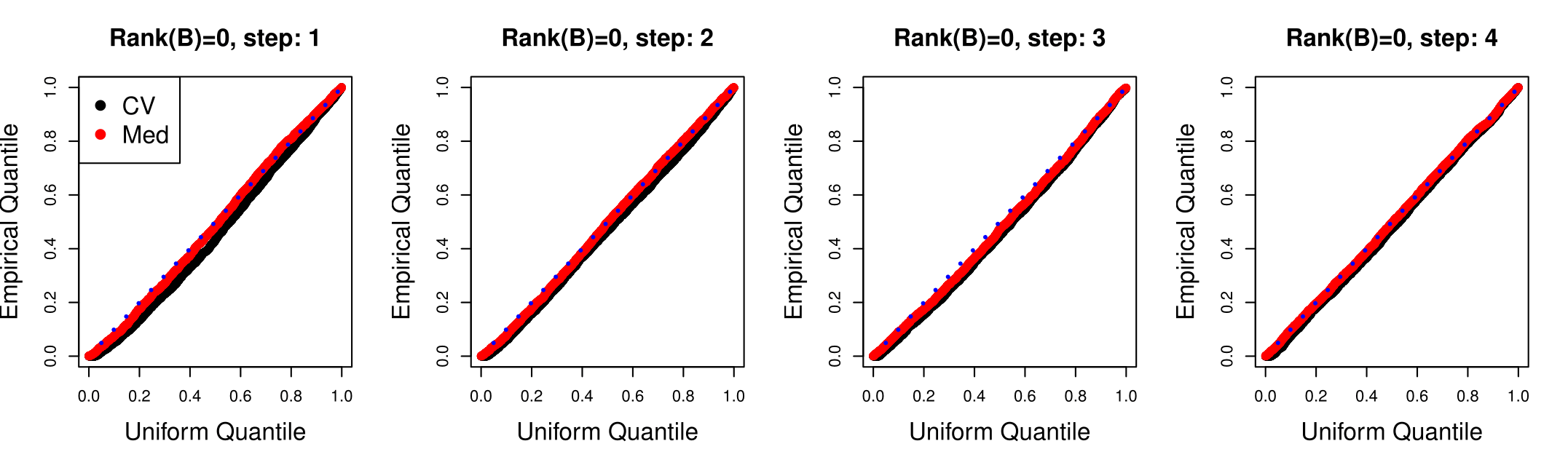} \\
                \includegraphics[height=1.5in, width=5in]{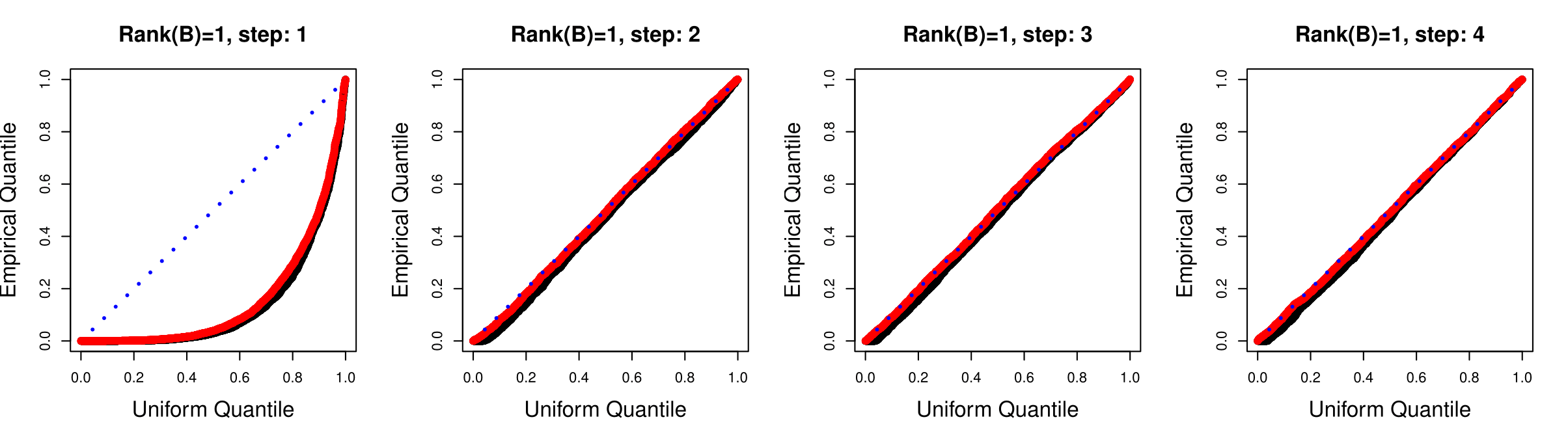} \\
                \includegraphics[height=1.5in, width=5in]{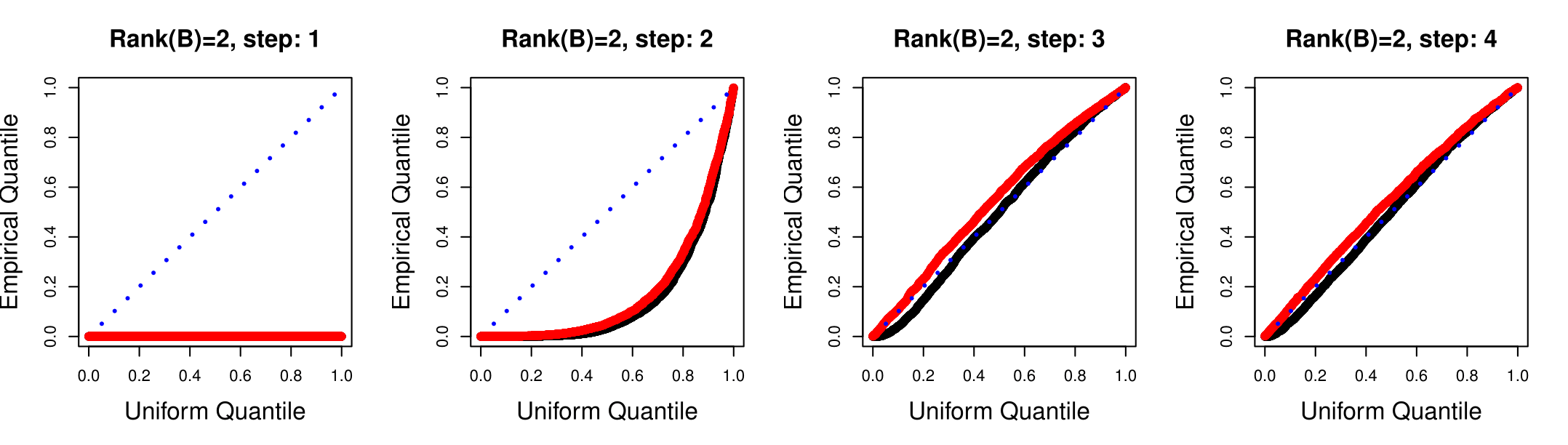} \\
                \includegraphics[height=1.5in, width=5in]{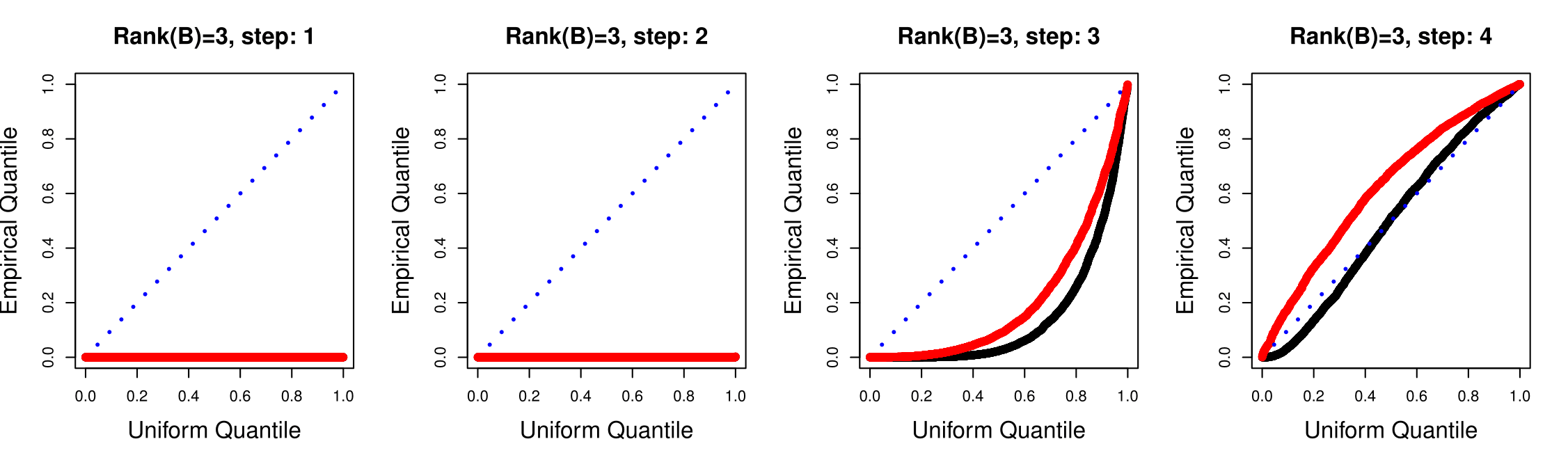} \\
        \end{tabular}
        \caption{ Quantile-quantile plots of the empirical quantiles of p-values with estimated $\sigma^2$ versus the  uniform quantiles at $m=1.5$ with $N=50$ and $p=10$. Emprical p-values are from the {\tt CSV} test. Each row represents ${\rm rank}(B)=0$ to ${\rm rank}(B)=3$ from the top to the bottom. Each column represents the $1^{st}$ test to the  $4^{th}$ test from the left to the right. The black dots (CV) and the red dots (med) represent p-values from using $\hat{\sigma}^2_{\lambda_{CV}, df, c}$ and $\hat{\sigma}^2_{med}$ respectively. For the estimator $\hat{\sigma}^2_{\lambda_{CV}}$, $c=2/3$ is used. \label{fig:pow_unknown}
        }
        \end{figure} 
\begin{table}[b]
        \caption{
        Simulation results for estimating the noise level $\sigma^2=1$ at $m=1.5$ with $N=50$ and $p=10$. We vary the rank of $B$ from $0$ to $3$. Shown are the mean estimated noise level (``{\bf Est}'') and standard error (``{\bf se}'') of the corresponding estimators.}
   \label{tab:sigEst_2} 
   \center
        \begin{tabular}{ c@{\extracolsep{0.5cm}}  c@{\extracolsep{0.2cm}} c@{\extracolsep{0.5cm}}  c@{\extracolsep{0.2cm}} c }
        \hline          
                \multirow{2}{*}{} & \multicolumn{2}{c}{$\hat{\sigma}^2_{\lambda_{CV}, df, c}$} & \multicolumn{2}{c}{$\hat{\sigma}^2_{med}$} \\ \cline{2-3} \cline{4-5} 
                ${\rm rank}(B)$ & \bf Est & \bf se & \bf Est & \bf se \\ \hline
                0 & 0.926 & 0.210 & 0.996 & 0.084 \\
                1 & 0.969 & 0.254 & 1.044 & 0.090 \\
                2 & 1.052 & 0.201 & 1.121 & 0.100 \\
                3 & 1.062 & 0.228 & 1.226 & 0.113 \\ 
                \hline
        \end{tabular} 
\end{table}

        Table \ref{tab:sigEst_2} illustrates the estimated $\sigma^2$ values we used for the testing procedure. Figure \ref{fig:pow_unknown} shows quantile-quantile plots of observed p-values obtained from using the estimated $\sigma^2$ versus the expected (uniform) quantiles. In quantile-quantile plots, both estimators of $\sigma^2$ show reasonable results in general, and for large ${\rm rank}(B)$, $\hat{\sigma}^2_{\lambda_{CV}, df, c}$ shows better result than $\hat{\sigma}^2_{med}$.  In terms of coverage rate of confidence interval, we can see from Figure \ref{fig:coverage_rate_unknown} that $\hat{\sigma}^2_{med}$ dominates for all cases, which might be due to small standard deviation of $\hat{\sigma}^2_{med}$ estimator. The estimation of ${\rm rank}(B)$ is presented in Table \ref{tab:estRank_unknown}. For the estimation,  {\em StrongStop} is applied to the {\tt CSV} p-values with level $\alpha=0.05$. The estimation performance seems to vary with the quality of the estimate of $\sigma^2$.

        \begin{figure}[tp] 
                \center
                \includegraphics[scale=.2]{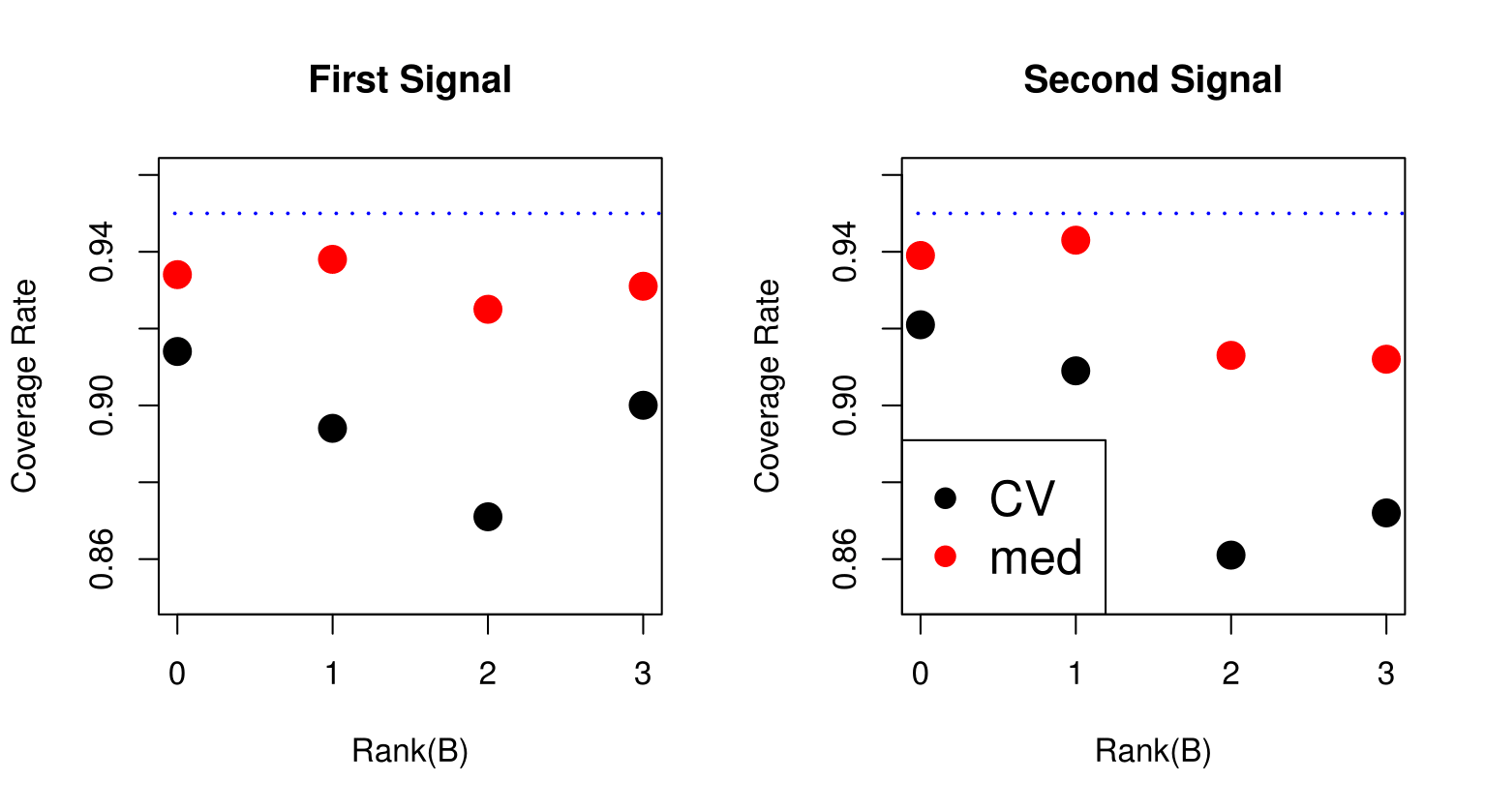} \\
        \caption{ Coverage rate versus ${\rm rank}(B)$ of the first two signal parameters at $m=1.5$ using estimated $\sigma^2$ with $N=50$ and $p=10$. Coverage rate denotes proportion of times  that  the constructed confidence interval from $CI_{k}(\mathbb{S})$ covered the true parameter. The black dots (CV) and the red dots (med) represent the estimation using $\hat{\sigma}^2_{\lambda_{CV}, df, c}$ and $ \hat{\sigma}^2_{med}$ respectively.\label{fig:coverage_rate_unknown}
        } 
\end{figure}
\begin{table}[h!]
 \caption{
        Simulation results for selecting an exact ${\rm rank}(B)$ from the {\tt CSV} test using estimated $\sigma^2$ at $m=1.5$ with $N=50$ and $p=10$. {\em StrongStop} is applied to sequential p-values with $\alpha=0.05$. We vary the rank of $B$ from 0 to 3. Shown are the rate of selecting the correct rank of $B$ (``{\bf Rate}'') and mean squared error (``{\bf MSE}'') using the corresponding estimator of $\sigma^2$.
}
        \label{tab:estRank_unknown}
        \center
        \begin{tabular}{c@{\extracolsep{0.5cm}}  c@{\extracolsep{0.2cm}} c@{\extracolsep{0.5cm}}    c@{\extracolsep{0.2cm}} c  }
        \hline
        \multirow{2}{*}{} & \multicolumn{2}{c}{$\hat{\sigma}^2_{\lambda_{CV}, df, c}$} & \multicolumn{2}{c}{$\hat{\sigma}^2_{med}$} \\ \cline{2-3} \cline{4-5}
        ${\rm rank}(B)$ & \bf Rate & \bf MSE & \bf Rate & \bf MSE \\ \hline
        0 & 0.894 & 3.704 & 0.948 & 0.063 \\
        1 & 0.455 & 4.243 & 0.486 & 0.514 \\
        2 & 0.231 & 1.080 & 0.157 & 0.853 \\
        3 & 0.181 & 0.845 & 0.026 & 0.975 \\ 
        \hline       
        \end{tabular} 
\end{table}
\subsection{Simulation example with non-Gaussian noise\label{subsec:nonGaussian}}
        Our testing procedure is based on an assumption of Gaussian noise. Here we investigate the performance of the {\tt CSV} test on simulated examples with the normality assumption of noise violated. Simulation settings are the same as in Section \ref{subsec:hyptest_sim} with $N=50$ and $p=10$ except for the noise distribution. We study the case of ${\rm rank}(B)=1$ with $m=1.5$ along with two sorts of noise distribution: heavy tailed and right skewed. The heavy tailed noise is drawn from $\sqrt{\frac{3}{5}} t_5$ where $t_5$ denotes  t-distribution with degrees of freedom=5, and the right skewed noises are drawn from $\sqrt{\frac{3}{10}} t_5 + \sqrt{\frac{1}{2}} ({\rm exp}(1)-1)$ where ${\rm exp}(1)$ denotes the exponential distribution with mean=1. In each case, noise entries are drawn i.i.d., and known value of $\sigma^2=1$ is used.
        \begin{figure}[t]
        \center
        \begin{tabular}{c}
                \includegraphics[height=1.5in, width=5in]{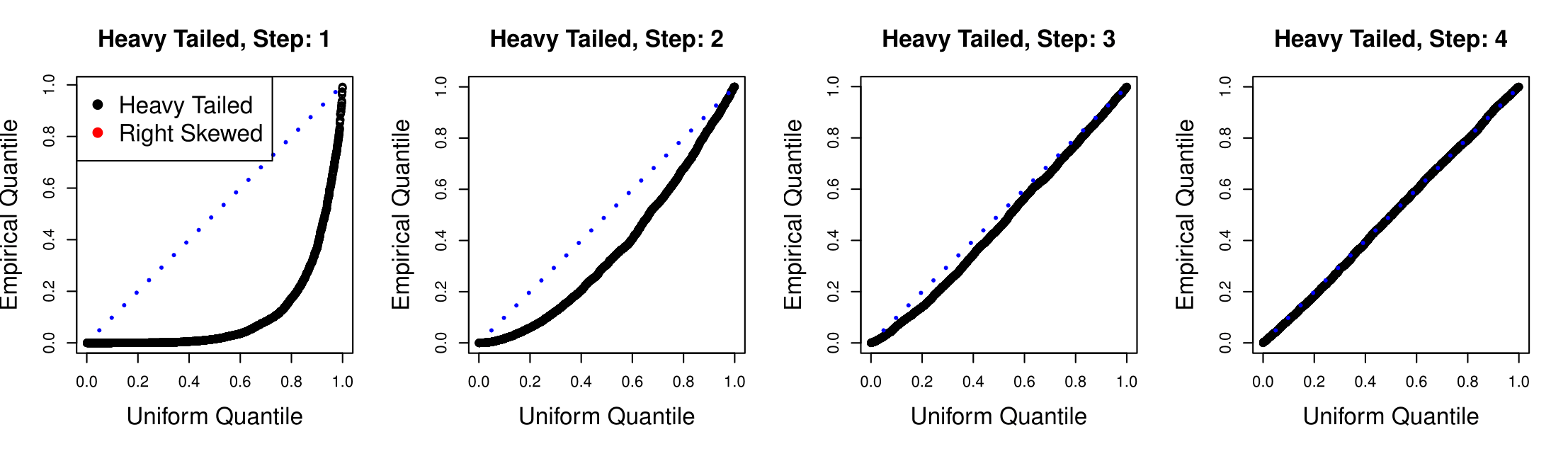} \\
                \includegraphics[height=1.5in, width=5in]{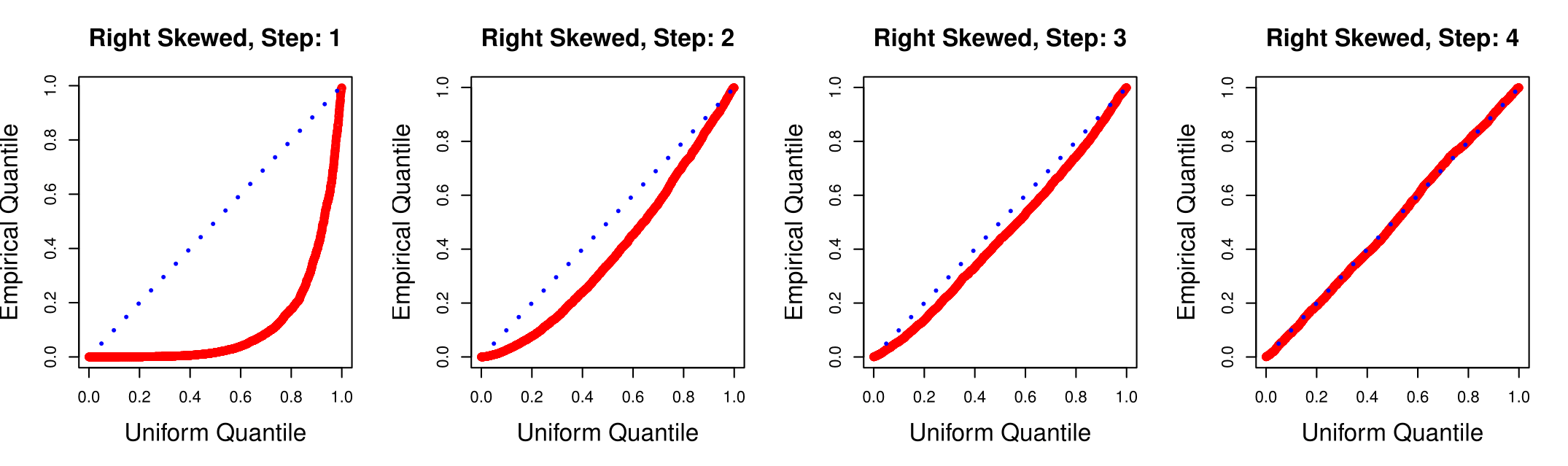} \\
        \end{tabular}
        \caption[]{ Quantile-quantile plots of empirical quantile of p-values versus uniform quantile when noises are drawn i.i.d. from the heavy tailed or the right skewed distribution with known value of $\sigma^2=1$ with $N=50$ and $p=10$. The true rank of $B$ is ${\rm rank(B)}=1$ with $m=1.5$. The top panels correspond to the heavy tailed noise from $\sqrt{\frac{3}{5}} t_5$ and the bottom panels correspond to the right skewed noise from $\sqrt{\frac{3}{10}} t_5 + \sqrt{\frac{1}{2}} ({\rm exp}(1)-1)$. Each column represents the $1^{st}$ test to the $4^{th}$ test from the left to  the right.
        }
        \label{fig:non_gaussian}
        \end{figure}        
        
        Figure \ref{fig:non_gaussian} shows quantile-quantile plots of the observed p-values versus expected (uniform) quantiles.  The top panels correspond to the heavy tailed noise from $\sqrt{\frac{3}{5}} t_5$ and the bottom panels correspond to the right skewed noise from $\sqrt{\frac{3}{10}}t_5 + \sqrt{\frac{1}{2}}({\rm exp}(1)-1)$.
        Quantile-quantile plots from both types of noise show that the p-values  deviate slightly from a uniform distribution under the null $H_{0,2}: {\rm rank}(B) \leq 1$. For further steps, p-values lie closer to the reference line.  
        
        The nonconformity shown in early steps under the null hypothesis is not surprising considering the construction of the {\tt CSV} procedure based on Gaussian noise. In future work, we will investigate whether the procedures 
introduced here can be extended to a method robust to non-normality using data-oriented method such as bootstrap.

\subsection{Real data example\label{subsec:real_example}}
        In this section, we revisit the real data example mentioned in Figure \ref{fig:scor}.
        We apply the {\tt CSV} test to the data  of  examination marks of 88 students on 5 different topics of Mechanics, Vectors, Algebra, Analysis and Statistics \citep[p. 3-4]{mardia1979multivariate}, and determine the number of principal components to retain for PCA. 
        
        In this data, Mechanics and Vectors were closed book exams while the other topics were open book exam.
        We use $\hat{\sigma}^2_{\lambda_{CV}, df, c}=75.957$ and $\hat{\sigma}^2_{med}=131.332$ for the estimated noise level.
         For $\hat{\sigma}^2_{\lambda_{CV}, df, c}$, 20-fold cross-validation is used with $c=2/3$. The {\tt CSV} test results are presented in Table \ref{tab:real_pval}. The estimated ${\rm rank}(B)$ is 2 with $\hat{\sigma}^2_{\lambda_{CV}, df, c}$ estimator and 1 with $\hat{\sigma}^2 _{med}$ with level $\alpha=0.05$ using {\em StrongStop}. Thus, in PCA we may use one or two principal components depending on our choice of the noise level. In this example, one or two principal components makes sense as these 5 topics cover closely related areas.  

\begin{table}[h!]
        \caption{
        P-values at each step (``{\bf Step}'') and the selected number of principal components (``{\bf Selected}'') from the {\tt CSV} test using the estimated noise level on the examination marks of 88 students on five different topics (Mechanics, Vectors, Algebra, Analysis and Statistics). {\em StrongStop} is applied to select the number of principal components at level $\alpha=0.05$. The estimated value of $\sigma^2$ is 75.957 for the estimator $\hat{\sigma}^2_{\lambda_{CV}, df, c}$ and 131.332 for the estimator $\hat{\sigma}^2_{med}$.
        \label{tab:real_pval}} 
        \center
        \begin{tabular}{ c@{\extracolsep{0.5cm}} c@{\extracolsep{0.2cm}} c }
        \hline
        \bf{Step} & $\hat{\sigma}^2_{\lambda_{CV}, df, c}$ & $\hat{\sigma}^2_{med}$  \\ \hline      
        1 & 0.000 & 0.000 \\
        2 & 0.000 & 0.015 \\
        3 & 0.001 & 0.573 \\
        4 & 0.093 & 0.940 \\
        \hline
        \hline
            & $\hat{\sigma}^2_{\lambda_{CV}, df, c}$ & $\hat{\sigma}^2_{med}$  \\ \hline
        \bf Selected  & 2 & 1 \\
        \hline  
        \end{tabular} 
\end{table}
\section{Conclusions\label{sec:conclusion}}
        In this paper, we have proposed distribution-based methods for choosing the number of principal components of a data matrix. We have suggested novel methods  both  for hypothesis testing  and the construction of confidence intervals of the signals. The methods have exact type I error control and show promising results in simulated examples. We have also introduced data-based methods for estimating the noise level.
        
        There are many topics that deserve further investigation. In following studies, the analysis of power of the suggested tests and the width of the constructed confidence interval will be investigated. 
        Also, application of the methods to high dimensional data using numerical approximations will be explored.
        For multiple hypothesis testing corrections to be properly applied, we will study the dependence structure of the p-values in different steps.
       In addition, for robustness to non-Gaussian noise, bootstrap versions of this procedure will be investigated.
        Future work may involve a notion of degrees of freedom of the  spectral estimator of the signal matrix.  These extensions may  lead to improvement in noise level estimation.
        
        Variations of these procedures can potentially be applied to canonical correlation analysis (CCA) and linear discriminant analysis (LDA), and these are topics for future work.

\appendix

\section*{Appendix}

\subsection{Lemma \ref{lem:general_result} and Proof}
\begin{lemma} 
        \label{lem:general_result} 
        For $Y \in \mathbb{R}^{N \times p}$ where $Y \sim N(B, \sigma^2 I_{N} \otimes I_{p})$, we write the singular value decomposition of $Y$ by $Y = U_Y D_Y V_Y^{t}$ where $D_Y = {\rm diag}(y_1, \cdots, y_p)$ with $y_1 \geq \cdots \geq y_p \geq 0$. Without loss of generality we assume $N \geq p$. As in Section \ref{subsec:csv}, $U_Y^{r}, U_Y^{-r}, V_Y^{r}$ and $V_Y^{-r}$ denote submatrices of $U_Y$ and $V_Y$.
Writing the law by $\mathcal{L}(\cdot)$, the density of $\mathcal{L}(\cdot)$ by $d \mathcal{L}$, ${\rm diag}(y_1, \cdots, y_r)$ by $D_Y^{r}$, and ${\rm diag}(y_{r+1}, \cdots, y_p)$ by $D_Y^{-r}$, we have
        \begin{eqnarray*}
        \frac{ d\mathcal{L}( D_Y^{-r}, U_Y^{-r}, V_Y^{-r}  | U_Y^{r}, V_Y^{r}, D_Y^{r}) }{d\mathcal{L}(D_Y^{-r}, U_Y^{-r}, V_Y^{-r} | U_Y^{r}, V_Y^{r}, D_Y^{r}, B=0 )} \propto e^{ {\rm tr}(U_Y^{-r} D_Y^{-r} V_Y^{-r^T})^{T}B}.
        \end{eqnarray*}
\end{lemma}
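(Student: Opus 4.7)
The strategy is to reduce the claim to the ratio of unconditional Lebesgue densities of $Y$ under $N(B,\sigma^{2}I_N\otimes I_p)$ versus $N(0,\sigma^{2}I_N\otimes I_p)$, and then to exploit that the Jacobian from the change of variables to SVD coordinates is $B$-free and cancels in numerator and denominator.

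The starting point is the matrix-Gaussian density ratio
\[
\frac{f_B(Y)}{f_0(Y)} \;=\; \exp\!\left(\tfrac{1}{\sigma^{2}}\mathrm{tr}(Y^{T}B)-\tfrac{1}{2\sigma^{2}}\|B\|_F^{2}\right).
\]
Using the block decomposition $Y=U_Y^{r}D_Y^{r}V_Y^{r\,T}+U_Y^{-r}D_Y^{-r}V_Y^{-r\,T}$, the trace splits linearly into a term depending only on $(U_Y^{r},D_Y^{r},V_Y^{r})$ and one depending on $(U_Y^{-r},D_Y^{-r},V_Y^{-r})$. Conditioning on the former variables turns the first summand, together with $\|B\|_F^{2}$ and the $\sigma^{-2}$ factor, into constants that may be absorbed into the proportionality, leaving exactly $\exp\bigl(\mathrm{tr}(U_Y^{-r}D_Y^{-r}V_Y^{-r\,T})^{T}B\bigr)$ up to a scalar, which is the advertised form.

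To promote this joint-density statement to the stated statement about conditional laws, I would invoke the standard SVD change-of-variables formula (James, 1964): the joint density of $(D_Y,U_Y,V_Y)$ with respect to ordered-Lebesgue measure on the singular values times Haar measure on $O(N)\times O(p)$ equals $f_B(Y)\cdot J(D_Y)$, where $J(D_Y)\propto\prod_{i<j}(y_i^{2}-y_j^{2})\prod_{i}y_i^{N-p}$. Since $J$ is independent of $B$, it appears identically in numerator and denominator and cancels. Passing from joint to conditional densities then introduces the ratio of marginals of the conditioning variables $(U_Y^{r},V_Y^{r},D_Y^{r})$, but by the same decomposition this marginal ratio depends only on $(U_Y^{r},V_Y^{r},D_Y^{r},B)$ and is therefore absorbed into $\propto$.

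The main potential obstacle is the Jacobian/marginal bookkeeping rather than any substantive calculation: one must check that $J(D_Y)$ does not couple the ``kept'' and ``removed'' blocks in a way that survives cancellation, and that the marginal of $(U_Y^{r},V_Y^{r},D_Y^{r})$ obtained by integrating out $(D_Y^{-r},U_Y^{-r},V_Y^{-r})$ does not leak into the $(D_Y^{-r},U_Y^{-r},V_Y^{-r})$-dependence of the ratio. The cleanest resolution is to write the ratio of conditional densities as the joint-density ratio divided by the ratio of marginals of the conditioning variables, observe that the joint-density ratio has the clean exponential form derived above, and note that any additional factor produced in the denominator depends only on the conditioning variables and $B$, hence is legitimately absorbed in $\propto$.
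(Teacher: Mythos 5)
Your proposal is correct and follows essentially the same route as the paper's proof: both start from the Gaussian likelihood ratio $f_B/f_0=\exp\bigl(\sigma^{-2}\mathrm{tr}(Y^{T}B)-\tfrac{1}{2\sigma^{2}}\|B\|_F^{2}\bigr)$, pass to SVD coordinates where the $B$-free Jacobian cancels, split the trace via $Y=U_Y^{r}D_Y^{r}V_Y^{r\,T}+U_Y^{-r}D_Y^{-r}V_Y^{-r\,T}$, and absorb the factor depending only on the conditioning variables into the proportionality constant. Your explicit remark about the marginal-ratio bookkeeping is a point the paper leaves implicit, but it does not change the argument.
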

\begin{proof}
Without loss of generality, we assume $\sigma^2 = 1$.
Writing the density function of $Y$ as $p_{Y,B}(\cdot)$, when $B=0$, we have
\begin{eqnarray}
        \label{lem:null_case}
&&d\mathcal{L}(D_Y, U_Y, V_Y |B=0)  =  p_{Y,0}(U_Y D_Y V_Y^{T})|J(D_Y, U_Y, V_Y)| \nonumber \\
&& \mbox{\hspace{0cm} } \propto   e^{-\frac{1}{2} {\rm tr}((U_Y D_Y V_Y^{T})^{T}(U_Y D_Y V_Y^{T}))} \prod_{i=1}^{p} y_{i}^{N-p}\prod_{i<j} (y_i^2 - y_j^2) d\mu_{N, p}(U_Y) d \mu_{p,p}(V_Y) \nonumber \\ 
&& \mbox{\hspace{0.5cm} }\propto   F_{N,p}(y_1, \cdots, y_p) d\mu_{N, p}(U_Y) d \mu_{p,p}(V_Y) 
\end{eqnarray}
where 
\begin{eqnarray} 
        \label{lem:singular_dist}
        F_{N,p}(y_1, \cdots, y_p) = C_{N,p} e^{-\frac{\sum_{i=1}^{p}y_{i}^2}{2}} \prod_{i=1}^{p} y_i^{N-p} \prod_{i<j}(y_i^2 -y_j^2)
\end{eqnarray}
denotes the density of singular values of a $N \times p$ Gaussian matrix from $N(0, I_{N} \otimes I_{p})$ with a normalizing constant $C_{N,p}$, $J(\cdot)$ is the Jacobian of mapping $Y \rightarrow (D_Y, U_Y, V_Y)$, $\mu_{N,p}$ denotes Haar probability measure on $O(N)$ under the map $U \rightarrow U^p$, and $\mu_{p,p}$ denotes Haar probability measure on $O(p)$. Here, $O(m)$ denotes an orthogonal group of $m \times m$ matrices. 

        Note that, $d \mathcal{L}(D_Y, U_Y, V_Y | B=0) \propto  F_{N,p} d\mu_{N, p}(U_Y) d \mu_{p,p}(V_Y) $ can be alternatively derived from $D_Y \perp (U_Y, V_Y)$, and $U_Y \perp V_Y$. $U_Y$ and $V_Y$ have Haar probability measure on $O(n)$ under mapping $U \rightarrow U^{p}$ and on $O(p)$ respectively, since the distribution of $U_Y$ and $V_Y$ are rotation invariant. The determinant of the Jacobian $|J(D_Y, U_Y, V_Y)|$ can be calculated either explicitly or from this relation.
        
        As in the case of $B=0$, for general $B$ we have,
\begin{eqnarray}
\label{app:genlaw}
&& d \mathcal{L}(D_Y, U_Y, V_Y ) = p_{Y,B}(U_Y D_Y V_Y^{T})|J(D_Y, U_Y, V_Y)| \nonumber \\
&& \hspace{0.5cm} \propto  e^{ -\frac{1}{2}{\rm tr}( (U_Y D_Y V_Y^{T}-B)^{T}(U_Y D_Y V_Y^{T}-B) ) }|J(D_Y, U_Y, V_Y)| \nonumber \\
&& \hspace{0.5cm} \propto   e^{ {\rm tr}(U_Y D_Y V_Y^{T})^{T}B} \cdot e^{-\frac{1}{2}{\rm tr}B^T B} p_{Y,0}(U_Y D_Y V_Y^{T})|J(D_Y, U_Y, V_Y)| \nonumber \\
&& \hspace{0.5cm} \propto  e^{ {\rm tr}(U_Y D_Y V_Y^{T})^{T}B}  \cdot F_{N,p}(y_1, \cdots, y_p) d\mu_{N, p}(U_Y) d \mu_{p,p}(V_Y) \mbox{ from (\ref{lem:null_case})}.
\end{eqnarray}
Therefore, with given $U_Y^r$, $V_Y^{r}$, and $D_Y^{r}$, we have
\begin{eqnarray*}
&& d \mathcal{L}(D_Y, U_Y, V_Y | U_Y^r, V_Y^r, D_Y^r) \\
&& \hspace{0.5cm} \propto   e^{ {\rm tr}(U_Y D_Y V_Y^{T})^{T}B}  \cdot F_{N,p}(y_1, \cdots, y_p) d\mu_{N, p}(U_Y) d \mu_{p,p}(V_Y)  \\
&& \hspace{0.5cm} = e^{ {\rm tr} (U_Y^{r} D_Y^{r} V_Y^{r^T})^{T}B } \cdot e^{ {\rm tr} (U_Y^{-r} D_Y^{-r} V_Y^{-r^T})^{T}B } \cdot F_{N,p}(y_1, \cdots, y_p) d\mu_{N, p}(U_Y) d \mu_{p,p}(V_Y) \\
&& \hspace{0.5cm} \propto e^{ {\rm tr} (U_Y^{-r} D_Y^{-r} V_Y^{-r^T})^{T}B } F_{N,p}(y_1, \cdots, y_p) d\mu_{N, p}(U_Y) d \mu_{p,p}(V_Y)
\end{eqnarray*}
since $Y=U_Y D_Y V_Y^{T}=U_Y^{r}D_Y^{r}V_Y^{r} + U_Y^{-r}D_Y^{-r}V_Y^{-r}$, and thus
\begin{eqnarray}
\label{lem:condLaw}
&&d \mathcal{L}(D_Y, U_Y, V_Y | U_Y^r, V_Y^r, D_Y^r)  \propto \nonumber\\
&& \hspace{2cm} e^{ {\rm tr} (U_Y^{-r} D_Y^{-r} V_Y^{-r^T})^{T}B } F_{N,p}(y_1, \cdots, y_p) d\mu_{N, p}(U_Y) d \mu_{p,p}(V_Y)
\end{eqnarray}
and equivalently,
\begin{eqnarray*}
\frac{d\mathcal{L}(D_Y^{-r}, U_Y^{-r}, V_Y^{-r} | U^{r}, V^{r}, D^{r})}{d\mathcal{L}(D_Y^{-r}, U_Y^{-r}, V_Y^{-r} | U^{r}, V^{r}, D^{r}, B=0)} \propto e^{ {\rm tr}(U_Y^{-r} D_Y^{-r} V_Y^{-r^T})^{T}B} .  
\end{eqnarray*}
\end{proof}

\subsection{Proof of Theorem \ref{lem:csv}}
\begin{proof}
We follow the notations in Lemma \ref{lem:general_result}. Without loss of generality we assume $\sigma^2 =1$. When $U_{Y}^{-(k-1)}U_{Y}^{-(k-1) ^T} B V_{Y}^{-(k-1)}V_{Y}^{-(k-1)^T} = 0_{N \times p}$, then $e^{ {\rm tr}(U_Y^{-(k-1)}D_Y^{-(k-1)} V_Y^{-(k-1)^T})^{T}B}=1$. Thus, from Lemma \ref{lem:general_result} and (\ref{lem:condLaw}), we have
\begin{eqnarray*} 
        \label{lem:propEq} 
&& d \mathcal{L}(D_Y^{-(k-1)}, U_Y^{-(k-1)}, V_Y^{-(k-1)}| D_Y^{(k-1)}, U_Y^{(k-1)}, V_Y^{(k-1)}) \\ 
&& \hspace{1cm}\propto  d \mathcal{L}(D_Y^{-(k-1)}, U_Y^{-(k-1)}, V_Y^{-(k-1)}| D_Y^{(k-1)}, U_Y^{(k-1)}, V_Y^{(k-1)}, B=0) \\
&& \hspace{1cm} \propto   F_{N,p}(y_1, \cdots, y_p) d\mu_{N, p}(U_Y) d \mu_{p,p}(V_Y)
\end{eqnarray*}
where $F_{N,p}$ denotes the density of singular values of a $N \times p$ Gaussian matrix from $N(0, I_{N} \otimes I_{p} )$ as in (\ref{lem:singular_dist}). 
Therefore, we have
\begin{eqnarray} 
        \label{thm:csv_integrand}
&& d \mathcal{L}(y_k |y_i, i \neq k, U_Y^{(k-1)}, V_Y^{(k-1)}) \nonumber \\
&& \hspace{1cm} \propto  \prod_{i \neq k} |y_i^2 - y_k^2| e^{-\frac{y_k^2}{2}} y_k^{N-p} I_{\{ y_k \in (y_{k+1}, y_{k-1})\}} d\mu_{N, p}(U_Y) d \mu_{p,p}(V_Y).
\end{eqnarray}

        Note that (\ref{thm:csv_integrand}) is the integrand of the {\tt CSV} test statistic $\mathbb{S}_{k,0}$ with some cancellation from the fraction, and thus $\mathbb{S}_{k,0}$ is the probability of having the $k^{th}$ singular value that is bigger than the observed one given all the other singular values, $U_Y^{(k-1)}$ and $V_Y^{(k-1)}$.
        
        Thus, if the observation $Y$ is actually generated under\\ $U_{Y}^{-(k-1)}U_{Y}^{-(k-1) ^T} B V_{Y}^{-(k-1)}V_{Y}^{-(k-1)^T} = 0_{N \times p}$, then the density of the $k^{th}$ singular value becomes (\ref{thm:csv_integrand}) up to a constant. Consequently, when writing the observed singular values by $d_1, \cdots, d_p$, and the observed singular value decomposition of $Y$ by $U_Y D_Y V_Y^{T}$ without confusion, 
\begin{eqnarray*}
\mathbb{S}_{k,0} &=& \int_{d_k}^{d_{k-1}} d \mathcal{L}(y_k |d_i, i \neq k, U_Y^{(k-1)}, V_Y^{(k-1)}) dy_k \\
&=& P(y_k \geq d_k |  d_i, i \neq k , U_Y^{(k-1)}, V_Y^{(k-1)}) \\
                                  &\sim & {\rm Unif}(0,1).
\end{eqnarray*}
The denominator of $\mathbb{S}_{k,0}$ works as a normalizing constant.
\end{proof}

\subsection{Proof of Theorem~\ref{lem:integrated_csv}}
\begin{proof}
We follow the notation of Lemma \ref{lem:general_result} and without loss of generality, assume $\sigma^2 = 1$. Under $U^{-(k-1)}U^{-(k-1) ^T} B V^{-(k-1)}V^{-(k-1)^T} = 0_{N \times p}$, we have $e^{ {\rm tr}(U_Y^{-(k-1)}D_Y^{-(k-1)} V_Y^{-(k-1)^T})^{T}B}=1$. Then, as in Theorem \ref{lem:csv}, we have
\begin{eqnarray*}
        && d\mathcal{L}(D_Y^{-(k-1)}, U_Y^{-(k-1)}, V_Y^{-(k-1)} | D_Y^{(k-1)}, U_Y^{(k-1)}, V_Y^{(k-1)}) \\
& & \hspace{1cm }\propto    F_{N, p}(y_1, \cdots, y_p)        d\mu_{N, p}(U_Y) d \mu_{p,p}(V_Y)
\end{eqnarray*}
and therefore,
\begin{eqnarray}
        \label{lem:integrated_csv_integrand}
&& d\mathcal{L}(y_k, U_Y^{-(k-1)}, V_Y^{-(k-1)} | D_Y^{(k-1)}, U_Y^{(k-1)}, V_Y^{(k-1)}) \nonumber \\
&& \hspace{1cm} \propto   \left( \int \cdots \int F_{N, p}(y_1, \cdots, y_p) d y_p \cdots d y_{k+1} \right) \cdot d\mu_{N, p}(U_Y) d \mu_{p,p}(V_Y) \nonumber \\ 
&& \hspace{1cm} \propto  g(y_k) d\mu_{N, p}(U_Y) d \mu_{p,p}(V_Y)
\end{eqnarray}
where $g(\cdot)$ is defined in (\ref{gfun}). Note that (\ref{lem:integrated_csv_integrand}) is the integrand of the {\tt ICSV} test statistic $\mathbb{V}_{k,0}$, and $\mathbb{V}_{k,0}$ is the conditional survival function of the $k^{th}$ singular value given $D_Y^{(k-1)}, U_Y^{(k-1)}$ and $V_Y^{(k-1)}$.

        Thus, if the observation is under $U_{Y}^{-(k-1)}U_{Y}^{-(k-1) ^T} B V_{Y}^{-(k-1)}V_{Y}^{-(k-1)^T} = 0_{N \times p}$, then its $k^{th}$ singular value has the conditional     density of (\ref{lem:integrated_csv_integrand}) upto a constant. Consequently, when writing the observed $k^{th}$ singular value by $d_k$, and the observed singular value decomposition of $Y$ by $U_Y D_Y V_Y^{T}$ without confusion, 
\begin{eqnarray*}
\mathbb{V}_{k,0} &=& \int_{d_k}^{d_{k-1}}  d\mathcal{L}(y_k, U_Y^{-(k-1)}, V_Y^{-(k-1)} | D_Y^{(k-1)}, U_Y^{(k-1)}, V_Y^{(k-1)}) dy_k \\
&=& P(y_k \geq {d}_k | D_Y^{(k-1)}, U_Y^{(k-1)}, V_Y^{(k-1)}) \\
&\sim & {\rm Unif}(0,1).
\end{eqnarray*} 
The denominator of $\mathbb{V}_{k,0}$ works as a normalizing constant.
\end{proof}

\subsection{Proof of Theorem \ref{exactCI}}
\begin{proof}
We follow the notations in Lemma \ref{lem:general_result} and without loss of generality, assume $\sigma^2=1$. From (\ref{app:genlaw}), we have
\begin{eqnarray*}
&& d \mathcal{L}(D_Y, U_Y, V_Y) \\
&& \hspace{1cm}\propto e^{\rm{tr}(U_Y D_Y V_Y^T)^T B} \cdot F_{N,p}(y_1, \cdots, y_p) d \mu_{ N, p}(U_Y) d \mu_{p,p}(V_Y) \\
&& \hspace{1cm} = e^{\sum_{k=1}^p y_k \tilde{\Lambda}_k} \cdot F_{N,p}(y_1, \cdots, y_p) d \mu_{N, p}(U_Y) d \mu_{p,p}(V_Y) 
\end{eqnarray*}
where $\tilde{\Lambda}_k = \langle U_{Y,k}V_{Y,k}^{T}, B\rangle$ as defined in (\ref{tildeLambda_k}).
Therefore, we have 
\begin{eqnarray}
\label{app:ci_integrand}
&&d\mathcal{L}(y_k, | Y_i, i \neq k, U_Y, V_Y) \nonumber \\
&&\hspace{1cm} \propto e^{y_k\tilde{\Lambda}_k}\cdot F_{N,p}(y_1, \cdots, y_p) d \mu_{ N, p}(U_Y) d \mu_{ p,p}(V_Y) \nonumber\\
&&\hspace{1cm} \propto e^{-\frac{1}{2}(y_k - \tilde{\Lambda}_k)^2} y_k^{N-p} \prod_{j\neq k}^p |y_k^2 - y_j^2| 1 \{ 0 \leq y_p \leq \cdots \leq y_1\}.
\end{eqnarray}
As (\ref{app:ci_integrand}) is the integrand of $\mathbb{S}_{k, \tilde{\Lambda}_k}$, when writing the observed singular values by $d_1, \cdots, d_p$, and the observed singular value decomposition of $Y$ by $U_YD_YV_Y^T$ without confusion, we have
\begin{eqnarray*}
\mathbb{S}_{k, \tilde{\Lambda}_k} &=& \int_{d_k}^{d_{k-1}} d\mathcal{L}(y_k, | d_i, i \neq k, U_Y, V_Y) dy_k \\
& \sim & {\rm Unif}(0,1).
\end{eqnarray*}
Here, the denominator of $S_{k, \tilde{\Lambda}_k}$ works as a normalizing constant.
\end{proof}
\section*{Acknowledgements}
We would like to thank Boaz Nadler and Iain Johnstone for helpful conversations.
Robert Tibshirani was supported by NSF grant DMS-9971405 and NIH grant N01-HV-28183.
\bibliographystyle{imsart-nameyear}
\bibliography{bib2}
\nocite{*}
\end{document}